\numberwithin{equation}{section}	 
\renewcommand{\d}{\mathrm{d}}							
\newcommand{\C}{\mathbb{C}}			 
\newcommand{\R}{\mathbb{R}} 			
\newcommand{\calS}{\mathcal{S}}
\renewcommand{\O}{\mathcal O}
\newcommand{\tr}{\mathrm{Tr}}
\newcommand{\1}{\mathds{1}}
\newcommand{\<}{\left\langle}							
\renewcommand{\>}{\right\rangle}
\renewcommand{\leq}{\leqslant}
\renewcommand{\geq}{\geqslant}     
\newcommand{\vp}{\varphi}
\newcommand{\ve}{\varepsilon}
\renewcommand{\t}[1]{\textnormal{#1}}
\newcommand*{\defeq}{\mathrel{\rlap{%
			\raisebox{0.3ex}{$\m@th\cdot$}}%
		\raisebox{-0.3ex}{$\m@th\cdot$}}%
	=}
\newtheorem{theorem}{Theorem} 
\newtheorem{corollary}{Corollary}[section]
\newtheorem{lemma}{Lemma}[section]
\newtheorem{definition}{Definition}
\newtheorem{assumption}{Assumption}
\theoremstyle{remark}
\newtheorem{remark}{Remark}[section]
\theoremstyle{definition}
\newcounter{listi}
\let\oldtocsection=\tocsection
\let\oldtocsubsection=\tocsubsection
\let\oldtocsubsubsection=\tocsubsubsection
\renewcommand{\tocsection}[2]{  \hspace{0em}\oldtocsection{#1}{#2}}
\renewcommand{\tocsubsection}[2]{  \hspace{2em}\oldtocsubsection{#1}{#2}}
\renewcommand{\tocsubsubsection}[2]{\hspace{2em}\oldtocsubsubsection{#1}{#2}}
\begin{document}

	\title[
Norm convergence of confined fermionic systems at zero temperature
	]
	{
Norm convergence of confined fermionic systems at zero temperature
}

	\author{Esteban C\'ardenas} 
	\address[Esteban C\'ardenas]{Department of Mathematics,
			University of Texas at Austin,
			2515 Speedway,
			Austin TX, 78712, USA}
	\email{eacardenas@utexas.edu}

	\begin{abstract}
The  semi-classical limit  of   
ground   states of large    systems of fermions 
was studied by
Fournais, Lewin and Solovej in  (Calc. Var. Partial Differ. Equ., 2018).
In particular, the authors  prove 
weak convergence 
towards   classical  states associated to the minimizers of the Thomas-Fermi functional. 
In this paper, we revisit this limit
and    show that  under additional  assumptions--and, using simple arguments--it is possible 
to prove  that strong convergence holds    
in relevant  normed spaces. 
	\end{abstract}
	
	\maketitle
%
%
%

	\section{Introduction}
In this article, we study a system of $N$   identical fermions that move  in Euclidean space $\R^d$, where $d \geq 1 $ is the spatial dimension. 
We assume that the 
particles interact through a two-body potential
$V (x-y)$, and move 
under the action of an external trap 
 $U(x)$.
  Neglecting spin variables,    the Hamiltonian of the system takes the form (in appropriate units)
	\begin{equation}
		H_N 
		\equiv 
		\sum_{ i =1 }^N  
		\hbar^2 
		(- \Delta_{x_i}) 
		+ 
		\sum_{ i = 1 }^N U (x_i ) 
		+ 
		\lambda 
		\sum_{  i < j } V(x_i - x_j )   \ . 
	\end{equation}
Here,   $\hbar>0$ plays the role of  Planck's   constant, 
and $\lambda$
is the interaction strength. 
The Hilbert space of the system corresponds  to the subspace  
$ \mathfrak h _N \equiv L^2_a( \R^{dN})$  of   functions in $L^2$, that are antisymmetric with respect to the permutation of their variables. Namely
\begin{equation}
	L^2_a( \R^{dN})
	 = 
	 \{  \Psi \in L^2(\R^{dN}) :  \Psi(x_{\sigma(1)}  , \ldots, x_{\sigma(N)  })
	  =
	    \t{sgn} \, 
	    \sigma \, \Psi (x_1 , \ldots, x_N ) \ \forall \sigma \in S_N   \} \ , 
\end{equation}
where $S_N$ stands for the permutation group of $N$ elements. 
\vspace{1mm}

In this work,  we  are interested in the case in which the external potential
$U(x)$ acts as a trap, and focus on the associated
 ground state   problem. 
 Indeed, our focus here will be on the ground state energy, 
for large number of particles $N \geq 1$, and on which the
scales of the system are semi-classical, 
and the interaction is of mean-field type. 
In other words, we consider    
	\begin{equation}
\label{EN}
		E( N ) \equiv  
		\inf  \sigma (H_N )    
		\qquad
		\t{for}
		\qquad
		\hbar 
		\equiv 
		\frac{1}{N^{1/d}}  
		\qquad
		\t{and}
		\qquad
		\lambda \equiv  \frac{1}{  N } \ . 
	\end{equation} 
	In this situation, the Fermi momentum is  of $\mathcal O ( 1 )$  and, 
	 consequently, the energy per particle is $\mathcal O (1)$ as $N \rightarrow \infty$. 
Thus,  it makes mathematical sense to analyze the asymptotics of  the ratio $E(N)/N$, 
	describing the average energy per particle in the system. 
	
	\vspace{2mm}
	
	 \subsection{Thomas-Fermi theory} 
	Heuristically, in the large $N $ limit one is able to introduce a mean-field description
	by means of the Thomas-Fermi energy \cite{Fermi,Thomas}. 
In this theory, one replaces the  $N$-particle wave function 
$\Psi_N \in \mathfrak{ h }_N$ in the energy functional, 
with its  average   over the positions 
\begin{equation}
\label{position density}
	\rho_{\Psi_N}  (x ) \equiv 
	\int_{\R^{d( N -1 )}}  	|	\Psi _N (x , x_1 , \cdots , x_ {N-1})	|^2 \d x_1 \cdots \d x_{N-1} \  , \qquad x \in \R^d \ . 
\end{equation}
In this approximation, the leading order term of $E(N)/N$ is  then expected to be 
	described in terms
	of the 
	\textit{Thomas-Fermi functional} 
	\begin{equation}
		\label{thomas fermi}
		\mathcal E
		(\rho) 
		\equiv 
		\frac{d}{d+2}
		C_{TF}
		\int_{	\R^d	} \rho(x)^{1 + \frac{2}{d}} \d x 
		+ 
		\int_{	\R^d	} U(x) \rho(x) \d x  
		+ 
		\frac{1}{2}
		\int_{\R^{2d} } \rho(x) V(x-y) \rho (y) \d x \d y 
	\end{equation}
	where $\rho(x)$ denotes the  number  density of the particles at the point $x\in \R^d$, and $C_{TF} = 4\pi^2 (d/ |\mathbb S _{d-1}|)^{2/d}$ stands for the Thomas-Fermi constant.
	The first term in \eqref{thomas fermi} 
	corresponds to the kinetic energy of the system
	after taking into account Pauli's Exclusion Principle, and is sometimes referred to as the
	quantum pressure; 
	the second and third terms on the other hand correspond to the one- and two-body potential energies, respectively. 
	The variational problem now reads
	\begin{equation}
\label{etf}
		e_{TF}
		\equiv
		\inf
		\Big\{
		\mathcal E (\rho )
		: 
		0 \leq \rho  \in L^1\cap L^{1+2/d}(\R^d),   \|	 \rho	\|_{L^1}=1 
		\Big\}  \ . 
	\end{equation}

The analysis of the relationship between    $E(N)/N$ and $e_{TF}$ 
has been the focus of extensive research. 
In particular, it has been  proven under   general assumptions on  
the interaction potentials  
that  
	\begin{equation}
\label{asymptotics}
		\lim_{N\rightarrow \infty } \frac{E(N)}{N} 
		= 
		e_{TF} \ . 
	\end{equation}
The first rigorous derivation  of \eqref{asymptotics} goes back to
the works of Lieb and Simon \cite{LiebSimon1,LiebSimon2}
for Coulomb systems, 
and has been recently revisited by Fournais, Lewin and Solovej \cite[Theorem 1.1]{Fournais2018} for  more general models. 

\vspace{2mm}

On the other hand, one may also study the problem of convergence of  the 
spatial distributions of the particles,  as $ N \rightarrow \infty  $. 
In other words,  here one is interested in proving the following convergence statement 
for the position densities 
\begin{equation}
 \lim_{ N \rightarrow \infty }\rho_{\Psi_N}  =  \rho_{TF}   \ , 
\end{equation}
where $\Psi_N$ is extracted from the many-body Hamiltonian, 
and $\rho_{TF}$ is the minimizer of $e_{TF}$. 
The first result in this direction was again proven by Lieb and Simon
 \cite[Theorem III.3]{LiebSimon2} for Coulomb systems, where convergence holds in the weak-$L^1$ sense. 
Since then, convergence has been further improved and understood. 
For instance, more recently, 
the authors in \cite{Fournais2018} have proven convergence in the weak-$L^{1+ d/2}$ sense, 
for a  significantly larger class of potentials. 
In a similar spirit, 
Gottschling and Nam \cite{GottschlingNam}
have proven convergence of the energy functionals in the sense of Gamma-convergence. 
See also the work of Nguyen 
\cite{Nguyen}, 
where convergence  is understood in the context of Weyl's semiclassical law, 
starting from the Hartree-Fock functional.

\subsection{Convergence of states}

While the analysis of  the limit
of the  spatial distribution  $\rho_{\Psi_N}$
has received considerable 
attention in the literature in the last few decades, 
the question of convergence of \textit{states} 
has only recently started to be the focus of  mathematical research. 
Note that  the interest in the former 
arises in the context of Density Functional Theory and its applications to quantum chemistry, 
where  one is  interested  in determining the \textit{static} electronic structure of many-body ground states. 



 \vspace{2mm}
 
 Let us further explain and, at the same time,  fix the notation that we use in the rest of this paper.  
Namely, let us consider a sequence of normalized states  $\Psi_N \in \mathfrak{ h }_N$ 
that satisfies 
\begin{equation}
\label{app gs}
	\< \Psi_N ,H_N , \Psi_N\>_{\mathfrak{h}_N} = E(N)
	\big(
	1 + o(1)
	\big)  \qquad N \rightarrow \infty \ . 
\end{equation} 
We shall refer to $\Psi_N$ as an \textit{approximate ground state}. 
We define 
 its  \textit{one-particle reduced density matrix} 
as 
 the trace-class operator
 on $L^2(\R^d)$
with kernel 
	\begin{equation}
	\gamma_{\Psi_N } ( x, x')
	\equiv 
	N 
	\int_{\R^{d(N-1)}} 
	\Psi_N (x , x_1 , \ldots, x_{N - 1} ) \overline{\Psi_N (x' , x_1 , \ldots, x_{N-1})}
 \ 	\d x_1 \cdots \d x_{N-1} \   \ , 
\end{equation}
for $(x,x' ) \in \R^{2d}$.  
In particular, $\tr \gamma_{\Psi_N} = N$
 and $ 0 \leq  \gamma_{\Psi_N} \leq \1 $ due to Fermi statistics. 
Because all particles are identical, 
the operator  $\gamma_{\Psi_N}$ contains all the ``one-particle"
information of the system, i.e.
   expectation values 
of    one-particle observables
can be written in terms of  traces over $\gamma_{\Psi_N}$. 
Furthermore, for weakly interacting systems, 
 higher-order reduced density matrices (containing information about particle correlations)
can be  expected to be written 
in terms of $\gamma_{\Psi_N}$
plus   an error that vanishes in the limit $N \rightarrow \infty$.\footnote{This approximation
can be justified for instance in Hartree-Fock theory, where $\Psi_N$
is an $N$-particle quasi-free state,  i.e. a Slater determinant.
Higher-order particle distributions are 
written in terms of   $\gamma_{\Psi_N}$
thanks to Wick's theorem. 
}
Thus, it is natural to study the  limit of $\gamma_{\Psi_N}$ for large $N$ in order to understand 
the physical behaviour of the system. 

\vspace{2mm}

Since our scaling regime is of semi-classical type 
 $\hbar = N^{-1/d}  $, 
we shall analyze the 
  asymptotics of $\gamma_{\Psi_N}$
by looking at its 
\textit{Wigner function} 
	\begin{equation}
\label{definition wigner}
	{f}_{\Psi_N }
	(x,p)
	\equiv  
	\int_{	\R^d	}
	\gamma_{\Psi_N} 
\bigg(
x + \frac{y}{2} , x - \frac{y}{2}
\bigg)
	e^{- i  \frac{y\cdot p }{\hbar }} \d y  \  , 
\end{equation}
where $(x,p)\in\R^{2d}$ should be regarded as varying over macroscopic scales. 
With the present definition, 
it follows from $\hbar^d N =1 $ that the following normalization holds 
\begin{equation}
\label{f norm}
\frac{1 }{ (2 \pi)^d }
	\int_{\R^{2d}}  f_{\Psi_N} (x,p) \d x \d p 
	  =  1 \ . 
\end{equation}
 While the previous equation suggests that one could regard $(2\pi)^{-d} f_{\Psi_N}(x,p)$ as 
 a classical state ({i.e.} a probability measure on $\R^{2d}$), 
 it is well-known that in general it is not positive, 
 and may take on negative values.
 On the other hand, its associated \textit{Husimi measure}, is     a well-defined   measure  on phase space.
Here, we adopt the   following   definition 
	\begin{equation}
	\label{definition husimi}
	m_{\Psi_N}  \equiv f_{\Psi_N} * \mathscr{G}_{\hbar }   
\end{equation}
where 
$	\mathscr{G}_\hbar (z ) 
\equiv 
\hbar^{-2d} \mathscr{G}_1 (  z / \hbar )
$
is a   mollifier with Gaussian profile $\mathscr{G}_1 \in L^1(\R^{2d}) $, at scale $\hbar>0$. 
In particular, the following holds   for the Husimi measure
	\begin{equation}
\label{husimi conds}
	0 \leq  m_{\Psi_N}  (x,p) \leq 1 \qquad \t{and}
	\qquad 
\frac{1 }{ (2 \pi)^d } 
	\int_{	\R^{2d}	} m_{\Psi_N} (x,p) \d x \d p =1 \ , 
\end{equation}
see for reference  \cite[Section 2]{Fournais2018}. 
The first inequality is nothing but the 
 {Pauli Exclusion Principle}, 
whereas the second identity is a normalization condition. 
Let us remark here that, at least formally,
these two functions have the same limit:  $\lim_{ N \rightarrow \infty } m_{\Psi_N} = \lim_{ N \rightarrow \infty } f_{\Psi_N}$.
Hence, it is convenient to study them simultaneously. 
\vspace{2mm}

 The    question that   one now  may ask is what 
the asymptotics of   $f_{\Psi_N}$ is, 
for large $N$. 
  In order to motivate this limit, 
  we observe that one may 
 re-write the energy functional $E(N)$ in terms of   reduced density matrices, 
 and then in terms of the  Wigner function. 
 In the limit, one    formally  obtains  the 
 \textit{Vlasov energy}
 \begin{equation}
 	\mathcal V (f) 
 	\equiv 
\frac{1}{	(2\pi)^d	}
 	\int_{	\R^{2d}	} p^2 f(x,p) \d x \d p 
 	   + 
 	   \int_{	\R^{d}	} \rho_f(x) U(x) \d x 
 	    + 
 	    \frac{1}{2}
 	    \int_{\R^{2d}}
 	    \rho_f(x) V(x-y) \rho_f(y) \d x \d y  \ , 
 \end{equation}
 where $\rho_f (x) \equiv (2\pi)^{-d } \int_{\R^d} f(x,p)\d p $ 
 is the associated position density of $f$. 
 The minimization is carried out over all  functions in  $L^1$, 
 complying with the Pauli Exclusion Principle $0 \leq f \leq 1$. 
The connection between  the 
 Vlasov  and Thomas-Fermi energies
can be understood as follows. 
Let   $\rho \in L^1(\R^d) $ be a position density and define the classical state 
 	\begin{equation}
\label{definition f in rho}
 		f_{\rho}   (x,p) \equiv \mathds{1}(|p|^2 \leq C_{TF} \rho (x)^{2/d})    , \qquad (x,p )\in \R^{2d} \ . 
 	\end{equation}
Then, there holds   $\mathcal V(f_\rho ) = \mathcal E (\rho )$.  
Note that \eqref{definition f in rho} is nothing 
but a  local version of  a Fermi-Dirac distribution at zero temperature:  
  at position $x\in \R^d$
all the electrons  momenta accomodate  in order to fill a ball 
of radius  $p_F(x)  \sim  \rho(x)^{1/d}$. 
Letting $\rho_{TF}$ be the minimizer
of   $e_{TF}$, 
we can then
expect the relevant classical state to be $f_{TF} \equiv  f_{\rho_{TF}}$.

 \vspace{2mm}

With the above notations, we are now ready to state the main question of interest in this work.
Namely: 

 \vspace{1.2mm}
\textit{Assuming that $\Psi_N$
is an approximate ground state of $H_N$, 
in what sense can we prove that its Wigner function $f_{\Psi_N}$ converges to the   state $f_{{TF}}$? }

 \subsection{Our contribution}
 	Recently,  Fournais, Lewin and Solovej  \cite[Theorem 1.2]{Fournais2018} 
	have studied this problem in great generality. 
In our context, their most notable result
is the fact that      convergence holds in a weak sense,
even without requiring the uniqueness of the minimizers.
More precisely, 
it is shown that 
there exits a probabilty measure
$\mathscr P $ over the set $\mathcal M $
of all minimizers of the Thomas-Fermi functional, such that 
the following limit holds   (up to the extraction of a subsequence)
\begin{equation}
\label{weak convergence}
	\int_{\R^{2d}} f_{ \Psi_N } (x,p) \vp(x,p) 
	\d x \d p 
 \ 	\longrightarrow 	 \ 
	\int_{\mathcal M}
	\bigg( 
	\int_{\R^{2d}}
		f_{TF} (x,p)
	 \vp(x,p) 
	\d x \d p 
	\bigg)
	\d \mathscr{P}( \rho) 
\end{equation}
for all test functions  $\vp$ with bounded first derivatives. 
 Their main ingredient is a fermionic version 
 of the de Finetti--Hewitt--Savage theorem for classical measures--see their Theorem 2.7. 
Let us     note   that their   results also include convergence for Husimi measures (in a stronger mode of convergence), as well as for  higher-order density matrices, 
and additional analysis in the unconfined case ({i.e} where some particles may be lost in the limit).

\vspace{2mm}

The work of \cite{Fournais2018} motivated 
later  studies on the semi-classical limit 
of similar fermionic 
systems. 
For instance, 
the case of \textit{positive} temperature  
has been  considered by Lewin, Madsen and Triay in \cite{Lewin2019}. 
Most notably, 
strong convergence in $L^1$ is proven for Husimi measures, 
and analogous weak convergence results are proven for Wigner functions
and higher-order density matrices.
Here,  the additional assumption of minimizers being unique is made. 
Three-dimensional systems  
in the presence of	a 	homogeneous magnetic fields were   studied by 
Fournais and Madsen \cite{SorenMadsen20}
where various    scalings  for the strength  of the magnetic field are considered.
 Girardot and Rougerie \cite{GirardotRougerie2021}
 have also analyzed the semi-classical limit of 
 fermionic anyons. 
Here,   analogous results on the weak convergence of states is   proven for  $k$-particle Husimi measures. 
Finally, let us  mention that the study of convergence of states
of large quantum-mechanical systems 
in the style presented above  was first analyzed for bosons.
See {e.g.} the work of  Lewin, Nam and Rougerie  \cite{LewinNamRougerie14}.

\vspace{2mm}
The main contribution of this article
is to revisit the     convergence \eqref{weak convergence}
assuming  
zero magnetic fields 
and 
$\hat V \geq 0 $ 
(in particular, minimizers of $e_{TF}$
are unique). 
We summarize our main results as follows.
\begin{enumerate}[leftmargin=*]
	\item In Theorem \ref{theorem 1} 
	we prove that the Husimi measure converges strongly in $L^p$ for all $ p \in [1 , \infty)$, 
	and the Wigner function converges strongly 
	with respect to certain  Fourier-based norms, 
which 	include  the negative   Sobolev space  $H^{s}$  for every  $s<0$.  
	The latter Fourier-based norms   have been previously 
	considered in the derivation of the Vlasov equation
	from many-body systems (see e.g. \cite[Theorem 2.5]{Benedikter2016}), 
	as well as in the analysis of the space homogeneous Boltzmann equation \cite{Toscani}.

	\item While  we are not able  to prove in full generality that 
	$f_{\Psi_N}$
	converges strongly to $f_{TF}$ in $L^p$ spaces, 
	in Theorem \ref{thm 2}
	we provide additional  conditions  
	under which   convergence holds. 
	Most notably, we show that $L^2$
	convergence holds   if the sequence $\Psi_N$
	consists of Slater determinants (i.e. Hartree-Fock states). 
	Additionally, we prove
	that 
	whenever the system has uniform moments in 
	phase-space, 
	then
	 strong convergence in $L^p$
	holds  for additional  values  of $ p  <  2 $. 
	 This condition on the moments 
	 is    verified for the harmonic trap $U(x) =x^2. $

	\item 
	In Theorem \ref{thm 3} we investigate
	the   convergence of higher-order distribution functions, 
	first established by the authors in \cite{Fournais2018} in a weak sense. 
	Here, we show that $L^p$ convergence can be extended to the Husimi functions to all $ p \in [ 1,\infty)$. 
	For   Slater determinants, we show that the Wigner  functions  converge strongly in $H^{s}$  for all $s<0$, 
	but \textit{not} in $L^2$. 
\end{enumerate}

 The reader may wonder if 
 upgrading the mode of convergence 
 has value beyond pure mathematical interest. 
It turns out 
this question 
has its motivation in the    \textit{quantitative}  derivation of effective equations  for 
large systems of interacting fermions.
Indeed, here 
one   considers  an externally prepared system of inital data, 
that converges with respect to some norm, as  $ N \rightarrow \infty$. 
Subsequently, one wishes to prove (and, quantify)
that convergence is propagated at later times 
along the solutions of the equations under consideration; 
see e.g
\cite{Benedikter2014,Benedikter2016,Benedikter2016-2,FrestaPortaSchlein2023} and the references therein. 
%
%
%
%
%
%
In this context,  our main result  
  provides  examples of initial data for the  derivation of effective equations, 
which converges with respect to appropriate norms. 
Most importantly, these examples stem from states at zero temperature
and include the orthogonal projections (i.e. the  Slater determinants). 
In particular, one must be careful with the choice of norm  in this situation, 
since regularity of the limiting function is not abundant--the functions 
$f_\rho$ introduced in \eqref{definition f in rho}
are  only of bounded variation.

	 \section{Main results}

In this section, we state the main results of this article. 
Namely, Theorem \ref{theorem 1} and \ref{thm 2}. 
First, we fix the assumptions
on the potentials that we work with, as 
well as fixing the notation to be used throughout this paper. 

\subsection{Assumptions, definitions and notations}

We will work with the following set of assumptions. 
These are taken from \cite{Fournais2018}
up to the following modifications.
First,  we take $A \equiv 0$.
Second, we assume   
$\hat V \geq 0$. 
\begin{assumption}
	\label{assumption 1}
	$U : \R^d \rightarrow \R$
	and
	$V : \R^d \rightarrow \R$
	satisfy the following conditions. 
	
	\begin{enumerate}
		\item   
	$
			U_- \in 
			L^{1 + d/2}	(\R^d )
			+ L_\ve^\infty (\R^d)
			\quad
			\t{and}
			\quad 
			U_+ \in L_{\t{loc}}^1(\R^d) , 
			\ \lim_{|x| \rightarrow \infty} 
			U_+ (x) = + \infty \ . 
$

		\item $V$ is even,  $V  \in 	L^{1 + d/2}	(\R^d )
		+ L_\ve ^\infty (\R^d)$, and $\hat V \geq 0 . $
	\end{enumerate}
\end{assumption}

\begin{remark}
Given $p,q\in[1,\infty]$,  we write $ U \in L^p + L^q_\ve$ 
if the following is satisfied: 
for all $\ve>0$ there exists $U_1 \in L^p$
and 
$U_2 \in L^q$ 
such that 
	$U = U_1 + U_2$
and $\|	 U_2\|_{L^q} \leq \ve$. 
\end{remark}

\begin{remark}
Under these conditions, there always exists a minimizer of the Thomas-Fermi funtional.
In addition, thanks to  $\hat V \geq0$, 
the 
	functional $\mathcal E (\rho)$
	is strictly convex in $\rho$.
Hence, the minimizer is unique and is denoted by $\rho_{TF}$. 
\end{remark}

\begin{remark}
The additional conditions $A= 0 $
and 
$\hat V \geq  0 $
can be relaxed.
Namely,  as long as the Thomas-Fermi functional has a unique minimizer, then 
  Theorem \ref{theorem 1} and \ref{thm 2} still hold. 
These have only been included here for notational convenience
and simplicity of the exposition.
The only result that makes explicit use of these additional conditions
  is Lemma \ref{lemma moments} (see Remark \ref{remark ho} below), 
  although we believe 
  the result of the lemma   is still  true  for potentials $A$ and
$\hat V$
verifying  the  more general  assumptions considered in 
 \cite{Fournais2018}. 
\end{remark}

\vspace{2mm }
  
\textit{States}. 
Let $  (   \Psi_N ) _{N \geq 1} $ 
be a sequence of approximate ground states satisfying 
 \eqref{app gs}. 
We   let  $\gamma_{\Psi_N}$
be its one-particle reduced density matrix, 
$f_{\Psi_N}$ its  Wigner  function   and  $m_{\Psi_N}$  its  Husimi  measure.
In order to simplify  the notation, here and in the sequel  we  write for all $ N \geq 2 $
\begin{equation}
\label{states}
	f_N \equiv f_{\Psi_N}   \   , 
	\qquad 
	m_N \equiv m_{ \Psi_N} , 
	\qquad 
	\gamma_N\equiv \gamma_{\Psi_N} \   , 
	\qquad
	\t{and}
	\qquad 
	f \equiv f_{\rho_{TF}} \  , 
\end{equation}
where $\rho_{TF}$ is the unique minimizer of $e_{TF}$ and $f$ is defined through 
 \eqref{definition f in rho}.

\vspace{3mm}

\textit{Weak convergence}.
Let us denote $\<\vp, g\> = \int_{\R^{2d}} \vp (z) g(z) \d z $. 
Then, with the above notation, it follows from  
\cite[Theorem 1.2]{Fournais2018}
and uniqueness of the Thomas-Fermi minimizer 
that  
\begin{equation}
	\label{convergence minimizers}
	\lim_{N\rightarrow \infty }
	\< \vp , f_{ N } \>
	= 
	\< \vp ,  f   \> 
\end{equation}
for all  test functions 
$\vp $ with  $\vp , \, \nabla \vp \in L^\infty (\R^{2d}). $ 
On the other hand, for the Husimi measures  
\begin{equation}
	\label{convergence minimizers husimi}
	\lim_{N\rightarrow \infty }
	\< \vp , m_{N} \>
	= 
	\< \vp ,  f   \> 
\end{equation}
for all test functions $   \vp \in   L^1(\R^{2d})  +L^\infty (\R^{2d})$.  
Let us note that the original result in \cite{Fournais2018}
includes extraction of a subsequence via a compactness argument. 
Since the limit here is unique and  independent  of the subsequence, 
convergence  holds for the  whole sequence. 
 
\vspace{3mm}

\textit{Wigner transform}.
We   denote by $W^\hbar : L^2_{x,x'}(\R^{2d}) \rightarrow L^2_{x,p} (\R^{2d})$ 
the linear map 
\begin{equation}
\label{wigner def}
	W^\hbar[\gamma] (x,p)
	 \equiv 
	\int_{	\R^{d}	} \gamma \Big(
	x - \frac{y}{2} , x + \frac{y}{2}
	\Big) e^{ - \frac{i p \cdot y}{\hbar}} 
	\d y  \ , \qquad (x,p)\in \R^{2d}
\end{equation}
which we refer to as   the Wigner transform. 
We abuse notation 
and   identify Hilbert-Schmidt operators with their $L^2$ kernels. 
In particular,  
$f_N = W^\hbar [\gamma_N]$. 
The map  $W^\hbar$ is  an $L^2$-isomorphism, 
and the   normalization is chosen so that
$\|	 W^{\hbar}[ \gamma ]	\|_{L^2} = (2\pi\hbar)^{d/2} \| \gamma	\|_{L^2}$.
See e.g
\cite[Proposition 13]{Combescure2012}.

\vspace{3mm}

\textit{Fourier-based norms}. 
Let $n \in \mathbb N$.
Throughout this article, we let   $\<\zeta \>\equiv (1 + \zeta^2)^\frac{1}{2}$ be 
the Japanese bracket, 
and    $\hat g (\zeta) 
= 
(2\pi)^{-n/2}  
\int_{\R^{n}}
e^{ - i \zeta \cdot z } g(z) \d  z $
the Fourier transform of $g$.

\begin{definition}
Let $s \in\R $ and $q \in [1,\infty]$. 
We denote by $H_{s,q}(\R^{n})$
the space of tempered   
distributions $g \in \calS ' (\R^{n} )$
whose Fourier transform is  regular   
$\hat g \in L^1_{\t{loc}}(\R^n )$, 
and for which the   norm 
\begin{equation}
	\label{sobolev norm}
	|g |_{s,q} 
	\equiv 
	\|	 \< \zeta \>^{s} \hat g	\|_{L^q(\R^n )}
\end{equation}
is finite. 	
\end{definition}

We will use these spaces only in the case of negative order $s<0$. 

\begin{remark}
	A few comments are in order regarding the Fourier-based spaces $H_{s,q}$. 
	\begin{enumerate}[leftmargin=*]
		\item 
For all $s<  0 $
there holds 
		$	|g|_{s,2} \leq \|g	\|_{L^2}$
		and 
		$	|g|_{s, \infty} \leq \|	g\|_{L^1}  $. 
		
		\item 
		For $q=2$, $ H_{s,2} = H^{s }$
		is the standard   Sobolev space of order $ s \in \R  $.

		\item 
For $q=\infty$,       the spaces 
		$H_{s, \infty}$  are quite useful in the study of fermionic systems. 
		Namely,  for the Fourier transform of $f_N$
		the following formula holds, 
		sometimes known as   \textit{Groenewold's formula}: 
		\begin{equation}
			\label{groenewold}
			\hat f_N ( \zeta  ) 
			= \frac{1}{N}
			\t{Tr}  \Big[
			\O_{\xi ,\eta}  \gamma_{ N} 
			\Big]
			\ , \qquad  \zeta = (\xi, \eta) \in \R^{2d} \  . 
		\end{equation}
		Here,  
		$\O_{\xi , \eta  } \equiv  \exp(  i  \xi \cdot \hat x+ i \eta \cdot  \hat p  )$
		is  a  semi-classical observable, 
		with $\hat x$ and  $\hat p \equiv - i \hbar \nabla_x$
		the standard position and momentum observables on $L^2(\R^d).$
		In particular, while the $L^1$ norm of $f_N$ cannot be controlled
		by trace norms of $\gamma_N$,  
		it follows easily from  \eqref{groenewold} that 
		\begin{equation}
			\label{bounds L inf}
			| 		\hat f_N( \zeta ) |
			\  \leq  \ 
			\frac{1}{N} \|	 \O_{\xi , \eta  }	\|_{B(L^2)}  \,  \t{Tr}  \gamma_ {N}
			\   \leq  \ 
			1  \ , \qquad
			\zeta  = 	(\xi, \eta) \in \R^{2d} \ . 
		\end{equation}
		Consequently, $|f|_{s,\infty} \leq1$ for all $s \leq 0$. These  uniform bounds 
		will replace    the possible  lack of $L^1$ boundedness in our analysis. 	
	\end{enumerate}
\end{remark}

%
%
%
%

	\textit{Slater determinants}.
We say that  $\Phi \in \mathfrak{ h }_N$ is a Slater determinant 
	if there exists an orthonormal set $(\vp_i)_{i=1}^N \subset L^2(\R^d)$
	such that
	\begin{equation}
\label{slater}
\Phi  (x_1 , \ldots, x_N) = \frac{1}{\sqrt{ N!} }
		\det_{1  \leq i,j \leq N } 
		\Big[
		\vp_i ( x_j )
		\Big] \  , \qquad (x_1, \ldots, x_N) \in \R^{dN}  , 
	\end{equation}
 and we write $\Phi  = \vp_1 \wedge \cdots \wedge \vp_N$. 
	In particular,   $\Phi $ is a Slater determinant  if
	and only if its one-particle reduced density matrix is an orthogonal projection, {i.e.} if 
$
		\gamma_N^2 = \gamma_N   . 
$
%
 \subsection{Main results}
	
 We are now ready to state our main result, which is the content of the next theorem. Here, we make no additional assumptions
 on the sequence $\Psi_N$ under consideration. 
 The proof, given in  Section \ref{section proof main results},
 uses only  elementary $L^p$ inequalities
  and contains the main idea of the paper.
 Namely, that weak convergence can be improved to strong convergence, 
 using  the fact that the limiting function 
 solves the equation $f^2 = f$.

	\begin{theorem}[Norm convergence]
		\label{theorem 1}
Let $f_N$,  $m_N$ and $f$ be as in \eqref{states}, and let   Assumption \ref{assumption 1} hold. 
		Then, the  following statements are true.  
		\begin{enumerate}[leftmargin=1cm]
			\item 
			The sequence $m_N$ converges  to $f$ strongly in $L^p(\R^{2d})$ for all 
		$ p \in [1,\infty)$.

			\item The sequence $f_N$ converges to $f$ strongly in $H_{s,q}(\R^{2d})$
			 for all $q \in [2,\infty]$ and $s<0$. 
	\end{enumerate}	
	\end{theorem}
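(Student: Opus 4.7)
The plan exploits the algebraic identity $f^{2} = f$, which holds because $f$ is the indicator of the phase-space region $\{(x,p) : |p|^{2} \leq C_{TF}\rho_{TF}(x)^{2/d}\}$, combined with the Pauli-type bounds $0 \leq m_N, f \leq 1$, the uniform bound $|\hat f_N| \leq 1$ from \eqref{bounds L inf}, and the already-established weak convergence statements \eqref{convergence minimizers}--\eqref{convergence minimizers husimi}.

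\textbf{Part (1).} I would first establish $L^1$ convergence and then upgrade it to $L^p$ for every $p \in (1,\infty)$. Since $m_N$ and $f$ have equal total mass, we have
\begin{equation*}
\|m_N - f\|_{L^1(\R^{2d})} = 2 \int_{\R^{2d}} (m_N - f)_+ \, \d x \, \d p,
\end{equation*}
and the bound $m_N \leq 1$ yields the pointwise identity $(m_N - f)_+ = m_N (1 - f)$. Applying \eqref{convergence minimizers husimi} with $\varphi = 1 - f \in L^1 + L^\infty$ gives
\begin{equation*}
\int_{\R^{2d}} m_N (1 - f)\, \d x\, \d p \longrightarrow \int_{\R^{2d}} (f - f^{2})\, \d x\, \d p = 0,
\end{equation*}
by $f^{2} = f$. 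Once $L^1$ convergence is in hand, the bound $|m_N - f| \leq 1$ gives $|m_N - f|^{p} \leq |m_N - f|$ for $p \geq 1$, so $\|m_N - f\|_{L^p}^{p} \leq \|m_N - f\|_{L^1} \to 0$, closing Part (1).

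\textbf{Part (2).} I would first derive a uniform $L^{2}$ bound: the Wigner $L^{2}$ isometry and $\gamma_N^{2} \leq \gamma_N$ give
\begin{equation*}
\|f_N\|_{L^2(\R^{2d})}^{2} = (2\pi\hbar)^{d} \tr \gamma_N^{2} \leq (2\pi\hbar)^{d} \tr \gamma_N = (2\pi)^{d},
\end{equation*}
so $\|\hat f_N - \hat f\|_{L^2}$ is uniformly bounded. Pointwise convergence $\hat f_N(\zeta) \to \hat f(\zeta)$ follows from \eqref{convergence minimizers} applied to the test function $e^{-i\zeta \cdot z}$, whose derivatives are bounded. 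For $q = 2$ and any $s > 0$, I would split $|f_N - f|_{s,2}^{2}$ into the regions $\{|\zeta| \leq R\}$ and $\{|\zeta| > R\}$: the tail is controlled by $\langle R\rangle^{-2s} \|\hat f_N - \hat f\|_{L^2}^{2}$, while on the ball the dominated convergence theorem with majorant $4 \cdot \1_{\{|\zeta|\leq R\}}$ closes the argument. The case $q \in (2,\infty)$ then follows by the interpolation $\|h\|_{L^q} \leq \|h\|_{L^2}^{2/q} \|h\|_{L^\infty}^{1-2/q}$ with $h = \langle \zeta\rangle^{-s}(\hat f_N - \hat f)$ and the uniform bound $|\hat f_N - \hat f| \leq 2$ from \eqref{bounds L inf}.

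The case $q = \infty$ is the delicate one. The same tail estimate reduces matters to uniform convergence of $\hat f_N$ on compact sets of $\zeta$, which, given pointwise convergence, follows from Arzel\`a--Ascoli provided the family $\{\hat f_N\}_{N \geq 1}$ is equicontinuous on compacts. Using Cauchy--Schwarz for the trace together with the operator inequality $1 - \cos A \leq A^{2}/2$ applied to $A = \xi\cdot\hat x + \eta \cdot \hat p$, equicontinuity at the origin reduces to uniform bounds on $N^{-1}\tr[\hat p^{2}\gamma_N]$ (which follows from the kinetic energy bound $E(N) \lesssim N$) and on the spatial tails $N^{-1}\int_{|x|>R}\rho_N\, \d x$ (tightness, from $U_+(x) \to \infty$ combined with the energy bound). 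Extending equicontinuity to arbitrary base points $\zeta_0$ using the Weyl relations---while carefully tracking the $\hbar$-dependent phases from the BCH formula $O_\zeta O_{\zeta'} = e^{-i\hbar \omega(\zeta,\zeta')/2} O_{\zeta+\zeta'}$---will be the main technical obstacle in this step.
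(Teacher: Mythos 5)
Your Part (1) is correct and in fact more direct than the paper's argument: the paper first proves the $p=2$ case by expanding $\|m_N-f\|_{L^2}^2$ and then deduces $p=1$ from a decomposition using $f(1-f)=0$, whereas your identity $\|m_N-f\|_{L^1}=2\int (m_N-f)_+=2\int m_N(1-f)$ (valid because $f$ is $\{0,1\}$-valued, $0\le m_N\le 1$, and the total masses agree) reaches $L^1$ convergence in one step from \eqref{convergence minimizers husimi}; the upgrade to $p\in(1,\infty)$ via $|m_N-f|^p\le|m_N-f|$ matches the paper. In Part (2), your $q=2$ argument (uniform $L^2$ bound from $\gamma_N^2\le\gamma_N$, pointwise convergence of $\hat f_N$ from \eqref{convergence minimizers}, a tail cut at $|\zeta|=R$ plus dominated convergence on the ball) is a valid alternative to the paper's route, and your interpolation for $q\in(2,\infty)$ only uses the uniform bound $|\hat f_N-\hat f|\le 2$, so it does not depend on the $q=\infty$ case. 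These pieces are sound.

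The genuine gap is the $q=\infty$ case, which you yourself leave open ("the main technical obstacle"). Your proposed route--Arzel\`a--Ascoli via equicontinuity of $\{\hat f_N\}$--is not carried out, and it faces a real obstruction under the hypotheses of the theorem: the bound $1-\cos A\le A^2/2$ with $A=\xi\cdot\hat x+\eta\cdot\hat p$ requires control of $N^{-1}\tr[\hat x^2\gamma_N]$, i.e. a uniform second spatial moment, but Assumption \ref{assumption 1} only gives $U_+(x)\to\infty$, not $U_+(x)\gtrsim x^2$; the paper only establishes such moment bounds in the special harmonic case (Lemma \ref{lemma moments}). Substituting tightness for the second moment, and then propagating equicontinuity from the origin to arbitrary $\zeta_0$ through the $\hbar$-dependent BCH phases, would require a nontrivial additional argument that you have not supplied. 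The paper avoids all of this with one observation you should adopt: since $m_N=f_N*\mathscr G_\hbar$, one has $\hat m_N(\zeta)=\hat f_N(\zeta)\,\hat{\mathscr G}_1(\hbar\zeta)$ up to normalization, so $|\hat f_N(\zeta)-\hat m_N(\zeta)|\le |\hat{\mathscr G}_1|_{C^{0,s}}\,\hbar^s|\zeta|^s$ by $\hat{\mathscr G}_1(0)=1$, H\"older continuity of $\hat{\mathscr G}_1$, and $|\hat f_N|\le 1$; hence $|f_N-m_N|_{s,\infty}\le C\hbar^{\min(s,1)}\to 0$, and $|m_N-f|_{s,\infty}\le\|m_N-f\|_{L^1}\to 0$ by your own Part (1). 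This closes $q=\infty$ with no equicontinuity, no moment bounds, and no Weyl-relation bookkeeping.
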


\begin{remark}[$k$-particle functions]
Originally, 
the authors in \cite{Fournais2018}
proved that  \eqref{convergence minimizers} and  \eqref{convergence minimizers husimi} 
  hold for higher-order Wigner and Husimi functions.
Hence, the reader may   wonder if the content of Theorem \ref{theorem 1} 
extend  to these functions. 
The answer to this question is rather subtle
and is explained further in Section \ref{section higher}. 
\end{remark}

\begin{remark}[Positive temperature]
Note  that the equation $f = f^2$ holds \textit{only} at zero temperature, 
	because $f$ is a characteristic function of the phase-space.
	 In particular, the proof of Theorem \ref{theorem 1}
	does not apply directly to the positive temperature setting   \cite{Lewin2019}. 
\end{remark}

\begin{remark}[Position densities]
Let $\rho_{TF} $ be the unique minimizer of the Thomas-Fermi functional $e_{TF}$. 
Then,  Theorem \ref{theorem 1}  implies that 
	\begin{equation}
		\int_{	\R^{d}	} m_N( \cdot , p ) \d p \rightarrow \rho_{TF} \qquad 
		\t{strongly in } L^1(\R^d) \ . 
	\end{equation}
On the other hand, for the position density 
$\rho_N (x)   
= \int_{	\R^{d}	} f_N(x,p) \d p $
defined initially  in terms of $\Psi_N$ 
as \eqref{position density}, we have
\begin{equation}
\label{rho}
	\rho_N \rightarrow \rho_{TF}
	\qquad 
	\t{strongly in } H_{s, \infty }	(\R^d)   
\end{equation} 
for all $s< 0$. This follows from  
$ \hat \rho_{N} (\xi) = \hat f_N(\xi , 0 )$
and  $\<  \xi \>^{-s} \leq \< \zeta \>^{-s}$
for $\zeta  = (\xi,\eta)\in \R^{2d}.$ 
\end{remark}

A natural question is if the Wigner function   $f_N$
converges to $f$
in any   $L^p$ space.
We are not able to prove this statement in the most general case, 
but we are able to show convergence under additional assumptions
on the sequence $\Psi_N$. 
The first result in this  direction establishes $L^p$ convergence 
whenever the sequence $f_N$ is known to verify 
a uniform \textit{smoothness} assumption. We record this in the following corollary. 

\begin{corollary}\label{corollary}
	Under the same conditions of Theorem \ref{theorem 1}, 
	assume additionally that 
there exist
	$r>0$, 
	$p\in[1 ,\infty)$
	and $q \in [1,\infty ]$
	such that   
	\begin{equation}
		\label{sobolev type}
		\lim_{N \rightarrow \infty }
		\Big\|
		\frac{\|	 f_N  - f_N ( \bullet + \hbar z  )		\|_{L^p}}{|z|^r}
		\Big\|_{L^q(d z )}
		=0 \ , 
	\end{equation}
	then   $f_N$ converges to   $f$ strongly  in $L^p(\R^{2d})$. 
\end{corollary}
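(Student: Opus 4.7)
The plan is to reduce the $L^p$ convergence of $f_N$ to the $L^p$ convergence of the Husimi measure $m_N$, which is already furnished by Theorem \ref{theorem 1}(1). Since $m_N = f_N * \mathscr{G}_\hbar$, the key is to estimate $\| m_N - f_N \|_{L^p}$ by the modulus of continuity appearing in hypothesis \eqref{sobolev type}.

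More precisely, I would first write
\begin{equation*}
	m_N (x,p) - f_N (x,p)
	=
	\int_{\R^{2d}} \mathscr{G}_\hbar(z) \bigl[ f_N((x,p) - z) - f_N(x,p) \bigr] \, \d z
	=
	\int_{\R^{2d}} \mathscr{G}_1(w) \bigl[ f_N( \bullet - \hbar w)(x,p) - f_N(x,p) \bigr] \, \d w,
\end{equation*}
after the change of variable $z = \hbar w$ that uses $\mathscr{G}_\hbar(z) = \hbar^{-2d} \mathscr{G}_1(z/\hbar)$. Applying Minkowski's integral inequality in $L^p$ and then H\"older's inequality in $L^q(\d w)$ with conjugate exponent $q'$ yields
\begin{equation*}
	\| m_N - f_N \|_{L^p(\R^{2d})}
	\ \leq \
	\int_{\R^{2d}} \mathscr{G}_1(w) |w|^r \cdot \frac{ \| f_N - f_N(\bullet - \hbar w) \|_{L^p} }{ |w|^r } \, \d w
	\ \leq \
	C_r \cdot \Big\| \frac{\| f_N - f_N(\bullet + \hbar z) \|_{L^p}}{|z|^r} \Big\|_{L^q(\d z)},
\end{equation*}
where $C_r \defeq \| \mathscr{G}_1(w) |w|^r \|_{L^{q'}(\d w)}$ is finite for every $r>0$ and every $q' \in [1,\infty]$ thanks to the Gaussian decay of $\mathscr{G}_1$ (and using evenness of $\mathscr{G}_1$ to replace $-\hbar w$ by $\hbar z$).

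Hypothesis \eqref{sobolev type} then forces $\| m_N - f_N\|_{L^p} \to 0$ as $N \to \infty$. Combining this with Theorem \ref{theorem 1}(1), which asserts $m_N \to f$ strongly in $L^p(\R^{2d})$ for every $p \in [1,\infty)$, the triangle inequality gives $\| f_N - f\|_{L^p} \to 0$, which is the claim. There is no genuine obstacle: the argument is a direct comparison between Wigner and Husimi based on the fact that $\mathscr{G}_\hbar$ is a mollifier at scale $\hbar$. The only delicate point is choosing the weight $|w|^r$ so that the H\"older split passes through without affecting the Gaussian integrability, which is harmless since the Gaussian absorbs any polynomial factor.
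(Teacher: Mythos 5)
Your argument is correct and is essentially identical to the paper's own proof: the same splitting $\|f_N-f\|_{L^p}\le\|f_N-m_N\|_{L^p}+\|m_N-f\|_{L^p}$, the same representation of $f_N-m_N$ as a convolution difference at scale $\hbar$, and the same Minkowski--H\"older step with the Gaussian absorbing the weight $|z|^r$. No gaps.
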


	\begin{remark}
	The condition given in \eqref{sobolev type} 
	can be interpreted as  smoothness of the sequence
	that 	is \textit{uniform} with respect to the number of particles  $N.$
	For illustration, 
	we consider    two   situations  have been   considered in the literature. 
	\begin{enumerate}[leftmargin=*]
		\item 
			  	Assume that 
		$\sup_{N \geq 1} \|	 \nabla f_N	\|_{L^1} <\infty$. 
		The Taylor formula 
		$ f (z + \hbar z' ) - f(z) = \hbar z ' \int_0^1 \nabla f(z + t \hbar z' ) \d t $  
		and a change of variables 
		$z \mapsto z - t \hbar z' $
		implies  
		for $r =1$, $p=1$ and $q = \infty$
		\begin{equation}
			\label{sobolev type 2}
			\sup_{z'\in \R^{2d}}
			\frac{1}{|z'|} \int_{	\R^{2d}	} 
			|f_{N}  ( z  + \hbar z'   )  - f_N (z)  | \d z  \leq 
			\hbar \|	 \nabla f_N	\|_{L^1}     , 
		\end{equation}
		which yields $L^1$ convergence. 
The condition 
		$\sup_N \|	\nabla f_N\|_{L^1}< \infty$
		has been previously considered   in the derivation of mean-field dynamics
		for   fermionic systems,
see e.g 		\cite[Theorem 2.5]{Benedikter2016}
and the  remark after the theorem. 
		
		\item 
		Given $z_0 = (x_0,p_0)\in \R^{2d}$,  it is possible to verify the following  relation between translations in   phase-space distributions, and on quantum density matrices 
\begin{equation}
	\|	 f_N	-			f_N (\bullet + \hbar z_0)	\|_{L^2}
	 = 		(2\pi\hbar)^{d/2}	\|	 [	\O_{ p_0 ,  -x_0	} , \gamma_N	]			\|_{HS} \ . 
\end{equation}
In particular, if $\gamma_N$  satisfies the commutator estimates 
\begin{equation}
\label{commutator}
	\|	 [	\O_{  p,x 	} , \gamma_N	]			\|_{HS}^2 
	 \leq C N  \hbar  |z|  \ , \qquad \forall z  = (x,p)\in \R^{2d}
\end{equation}
then   $f_N$
satisfies \eqref{sobolev type} for $p=2$, $q=\infty$
and $r=1/2$, 
and yields $L^2$ convergence. 
Estimates of the form \eqref{commutator}
(as well as their stronger \textit{trace-class} variants)
arise in practice
for initial data  in the derivation of effective dynamics
for fermion systems 
\cite{Benedikter2014,Benedikter2016,Benedikter2016-2,FrestaPortaSchlein2023},
and  have been shown to be satisfied in a few special cases 
 \cite{Benedikter2023,Soren19}.

	\end{enumerate}

	\vspace{2mm}

\end{remark}

In practice,  verifying the smoothness condition \eqref{sobolev type}
is a  challenging task. 
Our next result brings an alternative.
Namely, if one assumes   that $\Psi_N$ is a Slater determinant, then $L^2$ convergence is immediate.
Further, we show that uniform control on the \textit{moments}
of the system yields  $L^p$ convergence for smaller values of $p$.  
 \vspace{2mm}

\begin{theorem}[Slater determinants]
	\label{thm 2}
Under the same conditions of Theorem \ref{theorem 1}, assume additionally  that   $\Psi_N$ is a Slater determinant for all $N\geq1.$
 Then,  the following holds. 

		\begin{enumerate}[leftmargin=1cm]
				\item 
The sequence	$f_N$  converges to  $f$  strongly in $L^2(\R^{2d})$.

			\item (Moments) 
			If there exists
  $ m >0 $ such that 
			\begin{equation}
\label{moments}
\sup_{N \geq 1}
				\|	 (|x|+ |p|)^m f_N  	\|_{L^2}  < \infty  , 
			\end{equation}
then $f_N$ converges to $f$ strongly in $L^p(\R^{2d})$ 
for all 
$p \in [1,2] \cap (  \frac{2}{1 + 2m/d} , 2  		]$.

		\end{enumerate}  
	\end{theorem}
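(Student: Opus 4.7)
My plan is to exploit the rigid algebraic identity $\gamma_N^2 = \gamma_N$, which characterises Slater determinants, in order to promote the weak-type convergence of Theorem \ref{theorem 1} into strong $L^2$ convergence, and then to upgrade to $L^p$ by coupling this with the moment assumption through a weighted interpolation.

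For part (1), the starting point is that the Slater structure forces exact equality of $L^2$ norms. By the Plancherel-type identity $\|W^\hbar[\gamma]\|_{L^2}^2 = (2\pi\hbar)^d \|\gamma\|_{HS}^2$ and the projection identity,
\begin{equation*}
\|f_N\|_{L^2}^2 = (2\pi\hbar)^d\, \mathrm{Tr}(\gamma_N^2) = (2\pi\hbar)^d\, \mathrm{Tr}(\gamma_N) = (2\pi\hbar)^d N = (2\pi)^d,
\end{equation*}
using $\hbar^d N = 1$. A direct calculation based on $f^2 = f$ and $\int_{\R^d} f(x,p)\,dp = (2\pi)^d \rho_{TF}(x)$ gives $\|f\|_{L^2}^2 = \int f = (2\pi)^d$, so $\|f_N\|_{L^2} = \|f\|_{L^2}$. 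Weak $L^2$-convergence $f_N \rightharpoonup f$ then follows from Theorem \ref{theorem 1}(2) with $q=2$ and any $s>0$: the convergence $\<\zeta\>^{-s}\hat f_N \to \<\zeta\>^{-s}\hat f$ in $L^2$ gives $\<\hat f_N,h\> \to \<\hat f,h\>$ for every $h$ in the dense subspace $\<\zeta\>^{-s} L^2 \subset L^2$, and this extends to all of $L^2$ by the uniform bound $\|\hat f_N\|_{L^2} = (2\pi)^{d/2}$. The Hilbert-space identity
\begin{equation*}
\|f_N - f\|_{L^2}^2 = \|f_N\|_{L^2}^2 - 2\,\mathrm{Re}\<f_N,f\> + \|f\|_{L^2}^2 \longrightarrow 0
\end{equation*}
concludes part (1).

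For part (2), set $F_N = f_N - f$; part (1) supplies $\|F_N\|_{L^2} \to 0$. Weak lower semicontinuity, applied along the strong $L^2$ convergence, transfers the moment bound from $f_N$ to $f$, so $\|\<z\>^m F_N\|_{L^2}$ is uniformly bounded. Log-convexity of weighted $L^2$ norms then yields, for every $\theta \in [0,1)$,
\begin{equation*}
\|\<z\>^{m\theta} F_N\|_{L^2} \leq \|\<z\>^m F_N\|_{L^2}^\theta\, \|F_N\|_{L^2}^{1-\theta} \longrightarrow 0.
\end{equation*}
A weighted H\"older inequality with exponents $(2,2p/(2-p))$ then gives
\begin{equation*}
\|F_N\|_{L^p} \leq \|\<z\>^{m\theta} F_N\|_{L^2}\, \|\<z\>^{-m\theta}\|_{L^{2p/(2-p)}(\R^{2d})},
\end{equation*}
and the second factor is finite precisely when $\theta > d(2-p)/(mp)$. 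Whenever $p$ lies above the threshold of the theorem, this interval intersects $[0,1)$, so one can pick an admissible $\theta$ and conclude $\|F_N\|_{L^p} \to 0$.

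The main technical difficulty is matching the sharp exponent in part (2): the weighted-H\"older argument above produces the weaker threshold $p > 2d/(d+m)$, whereas the theorem asserts the finer bound $p > 2d/(d+2m)$. Closing this factor-of-two gap presumably requires combining two pieces of information not used above---the uniform pointwise estimate $|\hat f_N(\zeta)| \leq 1$ and the Sobolev regularity $\hat f_N \in H^m(\R^{2d})$ (equivalent to the moment assumption via Plancherel)---to interpolate on the Fourier side, or alternatively decomposing $F_N = (f_N - m_N) + (m_N - f)$ to reduce matters to controlling the Gaussian-mollification error $f_N - m_N$, whose Fourier transform carries the small factor $1 - \widehat{\mathscr{G}_\hbar}$. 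I would expect this sharpening to be the most delicate part of the proof.
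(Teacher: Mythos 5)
Your part (1) is essentially the paper's own proof: the paper likewise computes $\|f_N\|_{L^2}=(2\pi\hbar)^{d/2}\|\gamma_N\|_{HS}=(2\pi)^{d/2}=\|f\|_{L^2}$ from $\gamma_N^2=\gamma_N$, identifies the weak $L^2$ limit with $f$ (via \eqref{convergence minimizers} and a subsequence argument, where you use Theorem \ref{theorem 1}(2) with $q=2$ plus density --- equally valid and slightly more self-contained), and upgrades to strong convergence from norm-plus-weak convergence (the paper invokes uniform convexity where you expand the inner product; same substance).

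For part (2) your mechanism is also the paper's: the paper splits $\R^{2d}$ into $B_R$ and $B_R^c$, uses the $L^2$ convergence on $B_R$, and on $B_R^c$ applies H\"older with exponents $(2,\,2p/(2-p))$ against the weight $\langle z\rangle^{-m}$, i.e.\ the $\theta=1$ endpoint of your interpolation family. The ``factor-of-two gap'' you flag is not a defect of your argument but a slip in the paper's: integrability of $\langle z\rangle^{-mr'}$ over the $2d$-dimensional phase space requires $mr'>2d$, whereas the paper asserts the criterion $mr'>d$, counting the dimension as $d$ instead of $2d$. With the correct count the paper's own computation yields precisely your threshold $p>2/(1+m/d)=2d/(d+m)$, not the stated $2/(1+2m/d)$. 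So you should not expect the Fourier-side interpolation you sketch to recover the stated range without genuinely new input; as far as the weighted-H\"older mechanism goes, your range is the one actually proved (and the downstream statements with $m=1$, e.g.\ \eqref{convg} and Corollary \ref{corollary 3}, inherit the same issue --- for $d=3$ one gets $p>3/2$ rather than $p>6/5$). Your transfer of the moment bound to $f$ by lower semicontinuity and the log-convexity step are both correct.
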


\begin{remark}
[The harmonic trap]
\label{remark ho}
For the external potential  $U(x) = x^2$
we verify
in Section  \ref{section harmonic} 
that the bounds  on the moments  \eqref{moments} hold
for $ m = 1 $, 
under Assumption \ref{assumption 1}. 
In particular, this yields   
\begin{equation}
\label{convg}
	f_N \rightarrow f \qquad 
	\t{strongly in }  L^p(\R^{2d}) \quad \t{for } 
 \begin{cases}
	p\in [1,2]    &\t{ if } d =1  \ ,  \\ 
 	p \in (1,2]   &\t{ if }  d= 2  \ , \\ 
  p \in (6/5,2)     &\t{ if } d =3  \ , 
 \end{cases}
\end{equation}
for Slater determinants. 
We would like to note that for $d=3$
  our assumptions include systems that interact through repulsive   Coulomb potentials
$V(x) = \lambda  |x|^{-1}$ with $\lambda > 0 . $
\end{remark}

	\begin{remark}
		The $L^2$ convergence result  \textit{does not} apply to  the 
		higher-order Wigner functions $f_N^{(k)}$ (see \eqref{def wigner k} for a definition). 
		In particular, we prove in Section \ref{section higher} 
		that for Slater determinants  
		\begin{equation}
			\liminf_{N\rightarrow \infty}
			\|	f_N^{(k)}	-	 f^{\otimes k}	 	\|_{L^2} \geq (2\pi)^{\frac{dk}{2}}
			\Big(
			\sqrt{k!} -1 
			\Big)	\qquad \forall k \geq 2 
		\end{equation}
		even though convergence $f_N^{(k)} \rightarrow f^{\otimes k }$
		holds in   $H^{s}(\R^{2dk})$ for all $s< 0$. 
	\end{remark}

The last of our main results is  the following interesting corollary. 
Here,  we derive an equivalent  formulation 
 of convergence in $L^p$ norm for Slater determinants. 
 
 \begin{corollary}\label{corollary 2}
Assume that   $\Psi_N$ is a Slater determinant for all $N\geq1$, 
 	and let $ p \in [1,2]$. 
 	Then, $f_N$ converges to $f$ in $L^p(\R^{2d})$ if and only if 
 	\begin{equation}
 		\lim_{ N \rightarrow \infty }	\|	 f_N	\|_{L^p} =	 (2\pi)^{d/p}	 \ . 
 	\end{equation}
 \end{corollary}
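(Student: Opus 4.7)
The forward implication is essentially automatic. By the reverse triangle inequality, $L^p$-convergence of $f_N$ to $f$ forces $\|f_N\|_{L^p}\to\|f\|_{L^p}$. To identify this limit, I use the fact that $f=f_{\rho_{TF}}=\mathds{1}\{|p|^2\le C_{TF}\rho_{TF}(x)^{2/d}\}$ is the indicator of a phase-space region, so $f^p=f$ for every $p>0$, and the normalization $(2\pi)^{-d}\int_{\R^{2d}}f=1$ gives $\|f\|_{L^p}^p=(2\pi)^d$, i.e. $\|f\|_{L^p}=(2\pi)^{d/p}$. Note that for $p=2$ the hypothesis is, in fact, automatically satisfied: for a Slater determinant $\gamma_N$ is a projection, so $\|\gamma_N\|_{HS}^2=\tr\gamma_N=N$ and the isometry formula for $W^\hbar$ gives $\|f_N\|_{L^2}^2=(2\pi\hbar)^d N=(2\pi)^d$ in view of $\hbar^d N=1$.

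For the converse, the strategy is to combine the strong $L^2$-convergence provided by Theorem \ref{thm 2}(1) with the Brezis--Lieb lemma. First, $f_N\to f$ in $L^2(\R^{2d})$ implies convergence in measure on $\R^{2d}$, and hence one may extract a subsequence $(f_{N_k})_k$ that converges to $f$ almost everywhere. The assumption $\|f_N\|_{L^p}\to(2\pi)^{d/p}$ yields, in particular, the uniform bound $\sup_k\|f_{N_k}\|_{L^p}<\infty$. I then apply the Brezis--Lieb lemma with exponent $p\in[1,2]$ to obtain
\begin{equation*}
\lim_{k\to\infty}\Big(\|f_{N_k}\|_{L^p}^p-\|f_{N_k}-f\|_{L^p}^p\Big)=\|f\|_{L^p}^p.
\end{equation*}
Combined with the assumed convergence $\|f_{N_k}\|_{L^p}^p\to\|f\|_{L^p}^p$, this identity forces $\|f_{N_k}-f\|_{L^p}\to 0$.

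Finally, I upgrade this subsequential statement to a statement about the full sequence via the standard subsequence principle: the argument above applied to an arbitrary subsequence of $(f_N)$ produces a further subsequence converging to $f$ in $L^p$. Since the limit $f$ is independent of the subsequence extracted, the entire sequence must converge in $L^p(\R^{2d})$. I do not expect any serious obstacle in executing this plan; the sole delicate point is verifying the hypotheses of Brezis--Lieb (a.e.\ convergence along a subsequence and uniform $L^p$-boundedness), both of which are directly supplied by Theorem \ref{thm 2}(1) and the assumption on the norms, respectively.
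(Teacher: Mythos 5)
Your proposal is correct, and the converse direction follows a genuinely different route from the paper's. The paper exploits the projection identity $f(1-f)=0$ to split $\|f-f_N\|_{L^1}=\int f\,|f-f_N| + \int (1-f)\,|f_N|$, controls the first piece by Cauchy--Schwarz and the $L^2$ convergence of Theorem \ref{thm 2}(1), and then pins down $\lim_N \int f\,|f_N| = (2\pi)^d$ by combining Fatou's lemma (along an a.e.-convergent subsequence) with a Cauchy--Schwarz upper bound; this yields the exact identity $\limsup_N\|f-f_N\|_{L^1}=\limsup_N\|f_N\|_{L^1}-(2\pi)^d$, of which the corollary is an immediate consequence, and the general $p\in[1,2]$ case is handled by ``minor modifications.'' You instead feed the same two ingredients --- a.e. convergence along subsequences (extracted from the $L^2$ convergence of Theorem \ref{thm 2}(1)) and the assumed norm convergence --- into the Brezis--Lieb lemma, and then invoke the subsequence principle. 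Both arguments are sound and both hinge on the same computation $\|f\|_{L^p}^p=\int f^p=\int f=(2\pi)^d$ coming from $f$ being a characteristic function. What your route buys is uniformity: Brezis--Lieb treats all $p\in[1,2]$ at once with no case-by-case adaptation, at the cost of importing a nontrivial external lemma. What the paper's route buys is a self-contained, elementary argument that moreover records the quantitative equality \eqref{mass} between the norm defect and the mass defect, which is slightly more information than the bare equivalence. Your side remark that the $p=2$ hypothesis is automatic for Slater determinants is consistent with the computation in the proof of Theorem \ref{thm 2}(1).
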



\section{Proof of the main results}
In this section we prove our main results, Theorem \ref{theorem 1} and Theorem \ref{thm 2}.

\label{section proof main results}
\subsection{Proof of Theorem \ref{theorem 1}}

The heart  of the   proof of  Theorem \ref{theorem 1}  has as a starting point the 
modes of convergences \eqref{convergence minimizers}
and
\eqref{convergence minimizers husimi}, 
proven in \cite{Fournais2018}.
We establish  strong convergence  by making heavy use
of the identity $f^2 = f $, 
which follows from the fact that $f$ is a characteristic function. 
In particular, 
it holds that $f (1-f) = 0$, i.e.
 we regard   $f$ and $1-f$  as    orthogonal projections.

\vspace{2mm}

It turns out that the proof  can be formulated
 using  rather general arguments, 
and is independent of the fact that $f_N$ is extracted from an $N$-body quantum system. 
For transparency, we prove the following 
abstract lemma
which may be of interest in its own right. 
To the authors best knowledge,  this   result 
(as well as its application to fermion systems) is new.
Essentially, it states that an appropiate notion of weak convergence
towards a characteristic function
can always be upgraded to strong convergence in $L^p$. 

\begin{lemma}\label{lemma abstract}
	Let $(X, \mathscr{F}, \mu)$ be a measure space. 
	Consider a  sequence $(F_n)_{n=1}^\infty$ of non-negative functions $F_n $
	in $L^1 (X ) \cap L^\infty (X)$ 
	that satisfies 
	\begin{equation}
		\label{cond Fn}
		0 \leq F_n \leq 1  \ \ \mu\t{-a.e }
		\qquad \t{and}\qquad 
		\limsup_{N \rightarrow \infty}  \int_X F_n \d \mu =1 \ .
	\end{equation}
	Let $F  \in L^1 (X ) \cap L^\infty (X)$ 
	be a non-negative function
	satisfying 
	\begin{equation}
		\label{cond F}
		0 \leq F \leq 1  \  \ \mu\t{-a.e }
		\qquad \t{and}\qquad 
		\int_X F \d \mu =1 \  , 
	\end{equation}
	and assume that for all $\phi \in L^1(X ) + L^\infty(X )$
	\begin{equation}
		\lim_{ n \rightarrow \infty }  \<  \phi ,F_n \>   =  \<  \phi ,F \>   \ . 
	\end{equation}
	Then, if $F = F^2 \ \mu $-a.e., there holds  
	\begin{equation}
		\lim_{ n \rightarrow \infty }	 \|	 F_n - F	\|_{L^p} =0 
		\qquad \forall p \in [1 , \infty) \ . 
	\end{equation}
\end{lemma}

\begin{proof}[Proof of Lemma \ref{lemma abstract}]
	First, we prove the case $p=2$.
	Then, we prove the case $p=1$.
	Every other case $ p \in (1,\infty)$
	then follows from the $p=1$ case, and   uniform $L^\infty$ bounds.

	\vspace{2mm}
	
	Let us prove the $p=2 $ case. 
	Since the $L^2$ norm is generated by an inner product, we find that 
	\begin{equation}
		\label{eq 1}
		\|	 F_n - F  	\|_{L^2}^2
		= 
		\|	 F 	\|_{L^2}^2
		+ 
		\|	  F_n \|_{L^2}^2
		-
		2 \< F , F_n \>  \ . 
	\end{equation} 
	For the  first term  in the right hand side  of \eqref{eq 1}, 
	we may use $F= F^2$
	and calculate $\|	F \|_{L^2}= 1$. 
	For the  second term, we use    the upper  bound 
	$\|	 F_n 	\|_{L^2} 
	\  \leq  \ 
	\| F_n	\|_{L^\infty}^{1/2}
	\|	F_n  \|_{L^1}^{1/2} $.
	Hence, $\limsup_{n\rightarrow \infty} \|	F_n\|_{L^2} \leq 1$. 
	For the third term, we note
	that   since the limiting function $F  $ is in $L^1$ 
	we may
	use the test function  $\phi  =  F$
	and  conclude that 
	\begin{equation}
		\lim_{N \rightarrow \infty} \< F , F_n \> 
		= 
		\<  F ,F  \> = 1 \ . 
	\end{equation}
	Thus, we put everything together to find that 
	\begin{equation}
		\limsup_{N\rightarrow \infty }
		\|	 F_n -F 	\|_{L^2}^2
		\leq 
		2      - 2 \lim_{N \rightarrow \infty} 	  	    \< F_n  , F\> 
		=  0 \  .
	\end{equation}
	This finishes the proof of the $p=2$ case. 
	
	\vspace{2mm}
	
	Let us prove the $p=1$ case. 
	In view of the identity 		$F (1 - F )=0$
	and the non-negativity of $F_n$
	and $1-F$
	we may consider the following decomposition 
	\begin{equation}
		\label{eq 2}
		\|	 F_n - F\|_{L^1}
		\	=  \ 
		\|	  F ( F - F_n  ) \|_{L^1}
		+ 
		\|	(1 - F  )   F_n  	\|_{L^1}
		\	\leq  \ 
		\|	F	\|_{L^2} \|	 F - F_n	\|_{L^2}
		+ \< 1 -F , F_n \> \ . 
	\end{equation}
	The first term in the right hand side  of \eqref{eq 2} is controlled by the difference in $L^2$ norm, analyzed above. 
	For the second term,  we use again 
	weak convergence  with the test function $\phi= 1 -F \in L^\infty $. 
	It suffices now to take the $n \rightarrow \infty$ limit, which finishes the $p=1$ case.

	\vspace{2mm} 
	
	Let us prove the $p\in(1,\infty)$ case. 
	For all such $p$, we use the fact that 
	$\|	 F_n -F  	\|_{L^\infty}\leq 2 $
	to find that 
	\begin{equation}
		\|	 F_n - F	\|_{L^p }^p 
		= \int 	| F_n - F |^p  \d \mu 
		=  
		\int 	|F_n  -F|  \ | F_n -F |^{p-1 } \d \mu 
		\leq 
		2^{p-1 }
		\|	 F - F_n	\|_{L^1} \ . 
	\end{equation}
	Because of the $p=1$ case, the right hand side now vanishes in the limit
	$n \rightarrow \infty$. This finishes the proof of the lemma. 
\end{proof}

\vspace{3mm}

Let us now turn to the proof of our main result, Theorem \ref{theorem 1}. 
The proof of its first part is a direct application of Lemma \ref{lemma abstract}. The proof of the second part
analyzes the difference between $f_N$ and $m_N$.

\begin{proof}[Proof of Theorem \ref{theorem 1}]
	Let $m_N$,  $f_N$  and $f$ be as in the statement of the theorem.
	
	\vspace{1mm}

	(1)  It suffices to use Lemma \ref{lemma abstract}
	with $X = \R^{2d}$, $\mathscr{F}$ the Borel sets and $\mu$ the $2d$-dimensional Lebesgue measure. 
	We consider the functions
	$F_N (z ) = 	m_N (  (2\pi)^{ 1/2}	z 	)	 	$
	and
	$F(z) = f ( (2\pi)^{ 1/2 }  z 	)$. 
	In particular, \eqref{cond Fn} and \eqref{cond F}
	are readily verified after a change of variables thanks to 
	\eqref{husimi conds}  and \eqref{definition f in rho}. 
	The weak convergence
	follows from  \eqref{convergence minimizers}
	and
	\eqref{convergence minimizers husimi}, 
	proven in \cite{Fournais2018}.
	
	\vspace{1mm}

	(2) 
	Let $|g|_{s,q}$
	be the norm introduced in \eqref{sobolev norm}.
	First, we prove the $q=\infty$ case. 
	Second, we prove the $q=2$ case. 
	Every other $ q \in (2, \infty)$ then follows by interpolation. 
	
	\vspace{2mm}
	
	\noindent \textit{The $q=\infty$ case}. 
	Let $s < 0$. 
	We use the triangle inequality  
	together with the elementary bound
	$  | \cdot |_{s,\infty} \leq \|	 \cdot 	\|_{L^1}$ to find that 
	\begin{align}
		| f_N - f  |_{s,\infty}
		& 
		\ \leq  \ 
		| f_N  - m_N|_{s,\infty}
		\ +  \ 
		\|	m_N - f \|_{L^1}
		\label{rhs0}
	\end{align}
	The second term in  \eqref{rhs0}  converges to zero, in view of part (1) of the Theorem. 
	Thus, it suffices to analyze   the first term in  \eqref{rhs0}.
	In particular, 
	we consider  only the case $| s|  \in ( 0,1]$, 
	since otherwise 
	we can use 
	$ | 	f _N - m_N 	|_{s,\infty} \leq 
	| 	f _N - m_N 	|_{1,\infty} 
	$
	in \eqref{rhs0}. 
	To this end,  we look at  its Fourier transform
	at $\zeta \in \R^{2d}$. 
	Namely, 
	let 
	$|g|_{C^{0, \alpha } }
	=
	\sup_{\zeta_1,\zeta_2 \in \R^{2d}} 
	| g(\zeta_1)  - g (\zeta_2) |	\, |\zeta_1-\zeta_2|^{-\alpha  }
	$
	the H\"older norm of a function $g : \R^{2d} \rightarrow \C$. Then,  we find
	\begin{align}
		|   (   \hat f_N - \hat m_N )(\zeta )| 
= 
		| \hat f_N   (\zeta) |  \, 
		| 1  -  \hat  {\mathscr G}_1 ( \hbar \zeta     )  | 
\leq 
		| \hat f_N (\zeta)   |    \, 
		| 		\hat {\mathscr G_1} |_{  C^{0, |s | }}
		\hbar^{|s|}    |\zeta|^{|s| } 
 \leq 
		| 		\hat {\mathscr G_1} |_{ C^{ 0,| s| }}  \, 
		\hbar^{ |s| }    |\zeta|^{|s| } 
		\label{rhs 1}
	\end{align}
	where we used $\hat{\mathscr G_1	}(0) =1$, 
	together with the  uniform bound 
	$\|  \hat f_N	\|_{L^\infty } \leq 1$ (see \eqref{bounds L inf}).
	The last inequality now implies that  
	\begin{equation}
		|  f_N  - m_N |_{s,\infty }
		\ 		 =   \ 
		\sup_{ \zeta  \in \R^{2d} }	
		\<  \zeta   \>^{s} 	
		|  ( \hat f_N - \hat m_N )( \zeta  )| 
		\ \leq \  
		| 		\hat {\mathscr G_1} |_{		C^{0,s}} 
		\hbar^{|s|	}  . 
		\label{rhs 2}
	\end{equation}
	This finishes the proof thanks to the fact  that 
	$	| 		\hat {\mathscr G_1} |_{		C^{0,| s| }}  < \infty$ for $|s| \in ( 0,1]. $
	
	\vspace{2mm}
	
	\noindent \textit{The $q=2$ case}. 
	The proof is identical to the proof of (2),  with the following changes. 
	First,  one replaces $\|	m_N - f\|_{L^1}$ with $\|	m_N  - f \|_{L^2}$
	and uses part (1) for $p=2$.
	Second, one replaces the uniform bound $\| \widehat{ f} _N \|_{L^\infty} \leq 1 $
	with 
	the  alternative uniform $L^2$-bound $\| \widehat{f}_N 	\|_{L^2} \leq (2\pi)^{\frac{d}{2}}$. 
	The  latter  bounds  follows 
	from the  fact that  (up to scaling)
	the Wigner transformation is a unitary map between $L_{x,x'}^2$ and $L_{x,p}^2$.
	More precisely, in our setting we obtain 
	\begin{align} 
		\label{L^2}
		\|	f_N	\|_{L^2} 
		\ = \ 
		(2\pi \hbar)^{\frac{d}{2}}
		\|	 \gamma_{N} 	\|_{L^2}	   
		\  		=  \ 
		(2\pi \hbar)^{\frac{d}{2}}
		\|	 \gamma_{N} 	\|_{HS}	  
		\  		 \leq  \ 
		(2\pi \hbar)^{\frac{d}{2}}
		\|	\gamma_N	\|_{\tr}^{\frac{1}{2}}
		\  		=  \ 
		(2\pi  )^{\frac{d}{2}} \ . 
	\end{align}
	In the last two inequalities we used the fact that 
	$ \|	\gamma\|_{HS} = (  \tr \gamma^*\gamma )^{\frac{1}{2}}
	\leq 
	\|	\gamma\|^{\frac{1}{2}}	_{B(L^2)} \| 	\gamma \|_{\tr}^{\frac{1}{2}}$
	for arbitrary trace-class operators on $L^2$, 
	the fact that for fermions $ 0 \leq \gamma \leq 1 $, 
	and the scaling $ \tr \gamma = N = \hbar^{- d}$. 
\end{proof}

\vspace{2mm}

\begin{proof}[Proof of Corollary \ref{corollary}]
	Let $p$, $q$ and $r$ be as in the statement of the corollary.
	Then, we use the triangle inequality to find that 
	\begin{equation}
		\|	f_N   -f 	\|_{L^p} \leq \|	f_N	-	m_N	\|_{L^p}	+ \|	 m_N - f 	\|_{L^p}  \ . 
	\end{equation}
	Thanks to Theorem \ref{theorem 1}, $\lim_{ N \rightarrow \infty }	\| m_N -f	\|_{L^p} = 0 $. Thus,     we only look at the first term.
	We write for $z = (x,p) \in \R^{2d}$
	\begin{equation}
		f_N (z ) - m_N(z)
		= 
		\int_{	\R^{2d}	}
		\big(    
		f_N(z  ) - f_N( z  - \hbar  z' )
		\big) 
		\mathscr{G}_1 (z') \d z'  \ . 
	\end{equation}
	Hence, Minkowski's and H\"older's inequality implies 
	\begin{align}
		\nonumber
		\|	f_N - m_N	\|_{L^p}
		& \ \leq  \ 
		\int_{\R^{2d}}
		\|	 f_N  -		f_N (\bullet + \hbar z' )			\|_{L^p}
		\mathscr{G}_1 (z' ) d z' 	\\
		\nonumber
		&  \  =   \ 
		\int_{\R^{2d}}
		\frac{\|	 f_N  -		f_N (\bullet + \hbar z' )			\|_{L^p}}{	|z'|^r	}
		\, |z'|^r
		\mathscr{G}_1 (z' ) d z'  	\\ 
		&  \ \leq  \ 
		\bigg\|
		\frac{\|	 f_N  -		f_N (\bullet + \hbar z' )			\|_{L^p}}{	|z'|^r	}
		\bigg\|_{L^q(d z ' )}
		\|	 |z '|^r \mathscr{G}_1 	\|_{L^{q'}(dz ' )}
	\end{align}
	where $1/q'=1-1/q$ is the dual exponent of $q$. 
	Note that 
	$\|	 |z '|^r \mathscr{G}_1 	\|_{L^{q'}(dz ' )}<\infty$
	because $\mathscr{G}_1$ is a Schwartz-class function. 
	In view of the assumption of the theorem,  it suffices to take 
	the limit $N \rightarrow \infty$. 
\end{proof}

\subsection{Proof of Theorem \ref{thm 2}}
\label{subsection 3.2}
Recall that
in  order to prove  that $f_N$ converges to $f$  in a Banach space $E$, 
it suffices to show that 
from  all subsequences $(f_{N_k})_{k \geq 1 }$
we can extract a further subsequence 
converging to $f$ in $E$. 
We shall be using this fact in the proof of Theorem \ref{thm 2} part (1) with 
$E= L^2  $ 
and Corollary \ref{corollary 2}
with $E = L^1  $, but make no explicit reference to it in 
order to simplify the exposition.

\vspace{2mm}

\begin{proof}[Proof of Theorem \ref{thm 2}]
	Let $\gamma_N$,  $f_N$ and $f$ be as in the statement of the theorem. 
	
	\vspace{2mm}
	(1) 
	When $\gamma_N$ is a rank-$N$ orthogonal projection, 
	its Hilbert-Schmidt norm  can be calculated to be
	$\|	  \gamma_N	\|_{HS} = (\tr \gamma_N^* \gamma_N)^{\frac{1}{2}} 
	= 
	(\tr \gamma_N)^{\frac{1}{2}}
	= N^{\frac{1}{2}}. 		$
	Thus, the same reasoning that led to  \eqref{L^2}  now yields 
	\begin{equation}
		\|	f_N	\|_{L^2} 
		\  		=  \ 
		(2\pi  )^{\frac{d}{2}} \ . 
	\end{equation}
	Thus,   we may assume that  $f_N$ converges weakly in $L^2$.
	Of course,  in view of \eqref{convergence minimizers}, this limit is given by $f $. 
	Further, $f = f^2$ implies that  $\|	f	\|_{L^2}  = (2\pi)^{d/2}$.
	This shows that 
	$\lim_{N\rightarrow \infty} \|f_N	\|_{L^2}  =  \|	f\|_{L^2}$.
	Since $L^2$ is a Uniformly Convex Space, 
	we may conclude that convergence is strong in $L^2$, 
	and this   finishes the   proof of the first part of the theorem. 
	
	\vspace{3mm}
	
	(2) 
	Let $m>0$ and $ p \in [1,2] \cap (	 \frac{2}{1+ 2m/d}	 ,2 ] $ be as in the statement of the theorem, and let us denote by 
	\begin{equation}
		K_m \equiv \sup_{N \geq 1 }
		\|    \< z  \>^m f_N		\|_{L^2} < \infty \ . 
	\end{equation}
	Let $R>0$
	and denote   $B_R \equiv \{z \in \R^{2d} : |z| \leq R \}$
	and $B_R^c \equiv \R^{2d} / B_R$.
	We let $ \chi_R \equiv \mathds{1}_{B_R}$
	and $\chi^c_R \equiv 1 - \chi_R$.
	
	\vspace{1mm}
	
	The proof starts with the  decomposition  
	\begin{equation}
		\label{eq thm 2 1}
		\|	f - f_N	\|_{L^p}
		\leq 
		\|   \chi_R (f - f_N)		\|_{L^p}
		+ 
		\|   \chi_R^c (f - f_N)		\|_{L^p} \ . 
	\end{equation}
	We estimate each term separately. The first term can be estimated 
	in terms of the $L^2$ norm, using 
	H\"older's inequality 
	\begin{equation}
		\label{eq thm 2 2}
		\|   \chi_R (f - f_N)		\|_{L^p} 
		\leq 
		| B_R|^{\frac{2-p}{2p}}\|	f - f_N\|_{L^2}\ . 
	\end{equation}
	Note that thanks to the first part of the theorem, the right hand side vanishes in the limit $N \rightarrow \infty$.  	For the second term of \eqref{eq thm 2 1},  we use the triangle inequality 
	\begin{equation}
		\label{eq thm 2 3}
		\|   \chi_R^c (f - f_N)		\|_{L^p}  
		\leq 
		\|   \chi_R^c  f 		\|_{L^p}   + 
		\|   \chi_R^c  f_N	\|_{L^p}  \ . 
	\end{equation}
	Note that $f \in L^p$ and so the first term converges to zero as $R \rightarrow \infty$.
	In order to control the second one, we use the moments of  $f_N$.
	Namely, H\"older's inequality gives 
	\begin{align}
		\label{eq thm 2 4}
		\|	 \chi_R^c f_N	\|_{L^p}
		\  =		 \ 
		\|	  \chi_R^c 	\< z\>^{-m}	\< z\>^m					f_N	\|_{L^p} 	 
		\ \leq  \ 
		\|	  \chi_R^c 	\< z\>^{-m}	\|_{L^{r'}}
		\|	\< z\>^m					f_N	\|_{L^r}
	\end{align}
	where $1/p = 1/r + 1/r'$. 
	We choose $r=2 $ and $r' = 2p / (2-p)$.
	In particular, 
	we observe that $\< z \>^{-m} \in L^{r'}$.
	Indeed, this is equivalent to  $ m r ' > d$, which  holds for 
	$p > 2 / (1 + \frac{2m}{d})$. 
	
	\vspace{1mm}
	To finish the proof, we take the $\limsup_{N \rightarrow \infty}$
	and use the inequalities 
	\eqref{eq thm 2 2},
	\eqref{eq thm 2 3}
	and 
	\eqref{eq thm 2 4}
	to find that 
	\begin{equation}
		\limsup_{N \rightarrow \infty} \|	f - f_N	\|_{L^p}
		\leq \|	  \chi_R^c f	\|_{L^p}
		+ K_m
		\|	  \chi_R^c 	\< z\>^{-m}	\|_{L^{r'}} \ . 
	\end{equation}
	It suffices to take the limit $R \rightarrow \infty$.
	This finishes the proof of the theorem.    
\end{proof}

\vspace{2mm} 

\begin{proof}[Proof of Corollary \ref{corollary 2}]
	For simplicity 	we only present the proof for $ p=1 $,   the other cases being analogous with only minor modifications.
	Since one implication is obvious,   we only prove the second one.

	\vspace{2mm}
	
We start  by using  the fact  $f=f^2$ 
	through the formula $f(1 - f)=0$, 
and estimate the $L^1$ norm difference as follows 
	\begin{align}
		\label{L1 step 1}
		\|	f - f_N	\|_{L^1}
		= 
		\int  f | f - f_N| 
		+ 
		\int  (1-f) | f - f_N| 
		= 
		\int f | f - f_N| 
		+ 
		\int  (1-f) | f_N|  \ . 
	\end{align}
	The first term in \eqref{L1 step 1} can be estimated
	in view of the Cauchy-Schwarz inequality, and the $p=2$ case established in Theorem \ref{thm 2}. Namely 
	\begin{equation}
		\int f | f - f_N| 
		\leq
		\|	f 	\|_{L^2} 
		\|	f - f_N	\|_{L^2}  = (2\pi)^{\frac{d}{2}}	  \|	f - f_N	\|_{L^2}  \ . 
	\end{equation}
	The second term in \eqref{L1 step 1} is more involved. 
	First, because 
	of $L^2$ convergence, 
	we may assume that 
	convergence holds pointwise almost everywhere, up to
	extraction of a subsequence (which we do not display explicitly).  
	Hence, Fatou's Lemma implies  that 
	\begin{equation}
		\label{eq3}
		(2\pi)^d 
		\ = \ 
		\int f |f|  
		\ 			\leq \ 
		\liminf_{N \rightarrow \infty}
		\int  f |f_{N  } | \ . 
	\end{equation}
	On the other hand,    $ \|	f_N\|_{L^2} = \| f \|_{L^2 }= (2\pi)^{\frac{d}{2}}$  combined with 
	the Cauchy-Schwarz inequality implies 
	\begin{equation}
		\label{eq4}
		\limsup_{N \rightarrow \infty}
		\int  f |f_N|  
		\ \leq  \ 
		\| f \|_{L^2} 
		\limsup_{N \rightarrow \infty}
		\|	f_N	\|_{L^2} 
		=			(2\pi)^d
		\ . 
	\end{equation}
	Hence,  \eqref{eq3} and \eqref{eq4} imply that 
	$\lim_{N \rightarrow \infty } \int  f |f_N| =			(2\pi)^d$.  
	Thus, going back to \eqref{L1 step 1} we see that 
	\begin{equation}
		\limsup_{N \rightarrow \infty} \|	 f - f_N	\|_{L^1} 
		\ 	\leq  \ 
		\limsup_{N\rightarrow \infty }  \| f_N	\|_{L^1} -
		(2\pi)^d  \ . 
	\end{equation}
Thus, thanks 
to the   assumption in the statement of the Corollary,  we find
$\lim_{ N \rightarrow \infty } \|	f_N\|_{L^1} = (2\pi)^d$. 
Finally, we note that the  limit $f$ is
independent of the chosen subsequence. 
Hence, $L^1$-convergence holds for the entire sequence (see also the comment at the beginning of Subsection \ref{subsection 3.2}). 
\end{proof}

\section{Application: the harmonic trap}
\label{section harmonic}

Throughout this section, 
we   consider   the external potential $U(x) = x^2$. 
Using the same notation as in the previous section, 
we denote by $f_N$ the sequence of Wigner functions
of the approximate ground state $\Psi_N$,
taken to be a Slater determinant in the context of Theorem \ref{thm 2}. 
We denote by   $f = f_{TF}$ the classical state
determined by $\rho_{TF}$, the minimizer of $e_{TF}$. 

\vspace{1mm}

In the following lemma, we prove that the condition on the moments \eqref{moments}
is verified for $m =1$. 
Note this is the only point in this article
where we make direct use of the 
conditions $A = 0 $ and $\hat V \geq 0$. 
We believe further analysis would
show these can be relaxed to those considered originally in \cite{Fournais2018}.

\begin{lemma}
	\label{lemma moments}
Assume that   $\Psi_N$ is a Slater determinant for all $N\geq1.$
Then, 	for $U(x) = x^2$
there holds  
	\begin{equation}
		\sup_{N \geq 1} \|    (|x| + |p|)	f_N	\|_{L^2} < \infty \ . 
	\end{equation}
\end{lemma}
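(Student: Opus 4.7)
The plan is to reduce the moment bound on $f_N$ to an a priori bound on the quadratic one-body moments $\text{Tr}(\hat x^2 \gamma_N)$ and $\text{Tr}(\hat p^2 \gamma_N)$, and then exploit the approximate ground-state condition together with the structure $U(x)=x^2$ of the harmonic trap.

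\emph{Step 1 (Wigner-to-trace reduction).} Starting from the defining formula \eqref{definition wigner}, the change of variables $u=x+y/2$, $v=x-y/2$ (of unit Jacobian) together with Plancherel's theorem in the dual variable $p$ yields
\begin{equation*}
\|x f_N\|_{L^2(\R^{2d})}^2 \ = \ (2\pi\hbar)^d \int_{\R^{2d}} \frac{(u+v)^2}{4}\, |\gamma_N(u,v)|^2\, du\, dv.
\end{equation*}
The elementary inequality $(u+v)^2 \leq 2(u^2+v^2)$, the symmetry of $\gamma_N$, and the projection identity $\gamma_N^2 = \gamma_N$ (which collapses $\int|\gamma_N(u,v)|^2\, dv$ to the density $\rho_N(u)$) yield $\|xf_N\|_{L^2}^2 \leq (2\pi\hbar)^d \text{Tr}(\hat x^2 \gamma_N)$. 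A parallel calculation using Plancherel in $y$, integration by parts, and the elementary inequality $|a-b|^2 \leq 2|a|^2+2|b|^2$ applied to the partial derivatives $\nabla_u\gamma_N$, $\nabla_v\gamma_N$ produces $\|pf_N\|_{L^2}^2 \leq (2\pi\hbar)^d \text{Tr}(\hat p^2 \gamma_N)$. Since $\hbar^d = 1/N$, the target bound reduces to showing $\langle \Psi_N, (T_N + U_N)\Psi_N\rangle \leq C N$, where $T_N = \sum_i \hat p_i^2$ and $U_N = \sum_i x_i^2$.

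\emph{Step 2 (Sign of the interaction).} Writing $T_N + U_N = H_N - \lambda V_N$, the approximate ground-state property combined with the Thomas-Fermi asymptotics \eqref{asymptotics} gives $\langle H_N \rangle \leq CN$, so it suffices to prove $\lambda \langle V_N \rangle \geq -CN$. Since $\Psi_N$ is a Slater determinant, $\lambda \langle V_N \rangle$ splits into the direct piece $D = \tfrac{\lambda}{2}\iint V(x-y)\rho_N(x)\rho_N(y)\, dx\, dy$ and the exchange piece $X = -\tfrac{\lambda}{2}\iint V(x-y)|\gamma_N(x,y)|^2\, dx\, dy$. The first is non-negative by Parseval and $\hat V \geq 0$, and an analogous Parseval argument applied to the even, non-negative function $K(z) = \int |\gamma_N(x,x-z)|^2\, dx$ — whose Fourier transform factorizes as $\hat K(\xi) = \|\gamma_N^{1/2}\, e^{-i\xi \hat x}\, \gamma_N^{1/2}\|_{\mathrm{HS}}^2 \geq 0$ — shows $X \leq 0$, reducing the task to proving $|X| \leq CN$.

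\emph{Step 3 (Exchange bound and bootstrap).} Decompose $V = V_1 + V_2$ with $V_1 \in L^{1+d/2}$ and $\|V_2\|_\infty$ as small as desired. The $V_2$-contribution to $|X|$ is bounded by $\tfrac{1}{2}\|V_2\|_\infty N$ using $\lambda = 1/N$ together with $\int |\gamma_N(x,y)|^2 dx dy = N$. The $V_1$-contribution is controlled via H\"older's inequality, the pointwise estimate $|\gamma_N(x,y)|^2 \leq \rho_N(x)\rho_N(y)$, and the Lieb-Thirring bound $\|\rho_N\|_{L^{1+2/d}}^{1+2/d} \leq C\hbar^{-2}\langle T_N\rangle$; inserted into Young's inequality this gives $|X| \leq \epsilon\langle T_N + U_N\rangle + C_\epsilon N$, which combined with Step 2 closes the estimate by a standard bootstrap. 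The main obstacle is this exchange estimate, particularly in $d=3$ where the Coulomb-type singularities permitted by Assumption \ref{assumption 1} require careful use of Lieb-Thirring (and, if needed, Hoffmann-Ostenhof combined with the Sobolev embedding); the harmonic structure enters only in Steps 1-2, where $\langle U_N\rangle = \text{Tr}(\hat x^2\gamma_N)$ is automatically controlled by the total energy.
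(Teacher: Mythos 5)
Your proposal is correct and follows essentially the same route as the paper: reduce the moment bound to $\tr\big((\hat x^2+\hat p^2)\gamma_N\big)\le CN$ via the Wigner-transform isometry and the projection property $\gamma_N^2=\gamma_N$, then control this trace by the energy upper bound $\<\Psi_N,H_N\Psi_N\>\le CN$ after discarding the non-negative direct term (using $\hat V\ge 0$). The only real difference is in the exchange-term estimate, which the paper simply quotes from \cite{Fournais2018} (Proposition 3.1) in the absorbable form $\ve_N\,\tr(\hat p^2+\1)\gamma_N$ with $\ve_N\to0$, whereas you re-derive it via Lieb--Thirring and Young's inequality; your version is self-contained and works, though the pointwise bound should read $|\gamma_N(x,y)|^2\le\gamma_N(x,x)\gamma_N(y,y)=N^2\rho_N(x)\rho_N(y)$ in the paper's normalization $\rho_N=\tfrac1N\gamma_N(x,x)$ (harmless, given the $\lambda=1/N$ prefactor).
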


\begin{remark}
As it will be clear from the proof, one only needs a lower bound
$U(x) \geq C x^2$   for the external trap, 
and the same result is  true. 
For simplicity of the exposition we choose  the harmonic  trap $U (x) = x^2$. 
\end{remark}

\begin{remark}
	The estimate \eqref{eq moment 2} 
	contained in  the  proof  of Lemma \ref{lemma moments}   below 
	is borrowed  from  the proof of \cite[Theorem 3]{Lafleche23}.
	Here, 
	the author considers 
	the convergence 
	of  the  Wigner transform of the   following  sequence of 
	one-particle density matrices  
	\begin{equation}
		\label{rho1}
		\rho_\hbar = \1_{( - \infty, 0 ]	} \big(
		\hat p ^2 + W (\hat x )
		\big) 
	\end{equation}
	where $ W : \R^d \rightarrow \R$ is a nice         potential 
	chosen so 
	  that the spectrum of $p^2 + W(x)$ is discrete  in $( - \infty, 0]$.
	These states  were first studied in \cite{Soren19}
	for $d \geq 2 $
	where it is  
	shown that  they  satisfy 
	optimal semi-classical commutator estimates. 
\end{remark}

\begin{corollary}\label{corollary 3}
	Under the same assumptions as in Theorem \ref{thm 2}
	and Lemma \ref{lemma moments}, there holds
	\begin{equation}
	f_N \rightarrow f \qquad 
\t{strongly in }  L^p(\R^{2d}) \quad \t{for } 
p \in [1,2] \cap \Big(  \frac{2}{1 + 2/d} , 2  		\Big]  \ . 
	\end{equation}
\end{corollary}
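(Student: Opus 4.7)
The plan is very direct: Corollary \ref{corollary 3} is an immediate consequence of the two results stated just before it, once one matches the parameters. Specifically, Lemma \ref{lemma moments} establishes the moment bound $\sup_N \|(|x|+|p|) f_N\|_{L^2} < \infty$ for Slater determinants in the harmonic trap, which is precisely the hypothesis \eqref{moments} of Theorem \ref{thm 2}(2) with the parameter choice $m=1$. Thus one simply invokes Theorem \ref{thm 2}(2) with $m=1$, and the range $p\in[1,2]\cap (\frac{2}{1+2m/d},2]$ reduces to $p\in[1,2]\cap (\frac{2}{1+2/d},2]$, which is exactly the claim.

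In more detail, the first step is to verify that Lemma \ref{lemma moments} applies, i.e.\ that we are in the setting where $\Psi_N$ is a Slater determinant and $U(x) = x^2$; these are the standing assumptions of Corollary \ref{corollary 3}. The second step is to observe that $(|x|+|p|) \geq 1$ is false in general, but one does not need any such pointwise bound: the hypothesis of Theorem \ref{thm 2}(2) is literally a uniform $L^2$ bound on $(|x|+|p|)^m f_N$, and Lemma \ref{lemma moments} provides this for $m=1$. The third step is to plug $m=1$ into the conclusion of Theorem \ref{thm 2}(2) to read off the interval of admissible exponents.

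There is no real obstacle here, since the work has been done in Lemma \ref{lemma moments} and in Theorem \ref{thm 2}. The only mildly delicate point worth spelling out is that the interval $(\frac{2}{1+2/d},2]$ behaves qualitatively differently with dimension: in $d=1$ it is $(2/3,2]$ so one recovers all of $[1,2]$; in $d=2$ it is $(1,2]$; and in $d=3$ it is $(6/5,2]$, so in higher dimensions one loses the endpoints $p=1$ and nearby values. This matches the illustrative discussion in \eqref{convg} and, in particular, explains why no further argument is needed beyond the two ingredients already proved.
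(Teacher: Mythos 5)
Your proof is correct and is exactly the argument the paper intends: Lemma \ref{lemma moments} supplies the moment hypothesis \eqref{moments} with $m=1$, and Theorem \ref{thm 2}(2) then yields the stated range of exponents. No further comment is needed.
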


\begin{remark}[One-dimensional harmonic oscillator]
	Let $d=1$ and consider the harmonic oscillator
	$h = - \hbar^2 \d^2/ \d x^2 + x^2$
	on $L^2(\R)$, 
	with spectrum $\sigma (h) = \cup_{n=0}^\infty \{ E_n\}$
	where $E_n = \hbar ( n + 1/2)$.
Each eigenvalue $E_n$
is non-degenerate
and corresponds to the eigenfunction $\psi_n$
being the $n$-th Hermite function.
Since $(\psi_n)_n$
is an orthonormal basis for $L^2(\R)$, 
it is possible to show that the 
ground state 
of 
$H_N = \sum_{ i = 1 }^N - \hbar^2 \d / \d x_i^2 + x_i^2$
on $\bigwedge_{i=1}^N L^2(\R)$
corresponds to the Slater determinant 
\begin{equation}
	\Psi_{N,0} =  \psi_0 \wedge \cdots \wedge \psi_{N-1}   \ , 
\end{equation}
with ground state energy $\sum_{ n=0 }^{N-1} E_n	$. 
Letting $f_{N,0}$ be the Wigner transform
of the ground state $\Psi_{N,0}$, 
Corollary \ref{corollary 3}
  implies   
\begin{equation}
	f_{N,0} \rightarrow  f_{\rho_0}	\qquad		\t{strongly in } L^1(\R^2 ) \ . 
\end{equation}
where we denote by $\rho_0 \in L^1(\R)$
the minimizer of the   Thomas-Fermi functional. 
\end{remark}

\begin{proof}[Proof of Lemma \ref{lemma moments}]
Let us  denote by $\hat x$ and $\hat  p   = -  i \hbar \nabla_x$ the standard position and momentum observables
in $L^2(\R^d)$.  We start the proof	
with the identity 
\begin{align}
		x f_N    =  \, (1/2) \,  W^\hbar[ \hat x \gamma_N  +  \gamma_N \hat x	]  , 
	\end{align}
which follows
from the decomposition 
 $x = 1/2(x + y/2)  +   1/2(x - y/2)$, the definition \eqref{wigner def}
 and the relations
 $ ( \hat x \gamma) (x,y) =  x \gamma(x,y) $
 and
 $
 (\gamma \hat x) (x,y) = 
 y \gamma(x,y). 
 $
Thus, $\|	W^\hbar[\gamma] \|_{L^2 } = (2\pi\hbar)^{d/2} \|	 \gamma\|_{L^2}$
implies 
\begin{align}
\label{eq moment 1}
	\|   x  f_N 	\|_{L^2}
	 \leq 
 (1/2)(2\pi \hbar)^{d/2}
	 \Big(
	 \|	 \hat x	 \gamma_N \|_{L^2}
	 + 
	 	 \|	 	 \gamma_N  \hat x \|_{L^2}
	 \Big) \  . 
\end{align}
We now calculate the right hand side as follows 
\begin{equation}
	\| \hat x \gamma_N	\|_{L^2}^2
	 = \|	 \hat x \gamma_N	\|_{HS}^2
	  = \tr  ( 
	  \hat x \gamma_N
)^*  (\hat x \gamma_N)
	  = 
	  \tr \Big(
	  \gamma_N \hat x ^2 \gamma_N  
	  \Big)
	  = 
	    \tr \Big(
 \hat x ^2 \gamma_N  
	  \Big) \   ,
\end{equation}
wher  we used cyclicity of the trace, and 
 $\gamma_N = \gamma_N^* =  \gamma_N^2 $. 
 The same identity holds for the second term of \eqref{eq moment 1}.  Thus, 
 \begin{equation}
 	\|	  x f_N	\|_{L^2}^2 
 	\leq 
  (2\pi	\hbar)^d \tr \big(
 	\hat x ^2 \gamma_N 
 	\big) \ .
 \end{equation}
In a similar fashion, the same argument can be repeated  for the momentum variable 
and can be combined with   the last estimate. 
Thus, the triangle inequality and the scaling $\hbar^d =N^{-1}$ gives 
\begin{equation}
\label{eq moment 2}
	 	\|	  ( |x|  + |p|)  f_N	\|_{L^2}^2 
	\leq 
\frac{C}{N} \tr \big(
	( \hat x ^2 +  \hat p ^2)  \gamma_N 
	\big) \ , 
\end{equation}
for a constant $C>0$. 
It suffices to show that the right hand side of \eqref{eq moment 2}
is bounded uniformly in $N$.
To this end, we use two facts:  that $\gamma_N$ corresponds
to an   approximate ground state; 
and that the two-body interaction is
dominated by the kinetic energy. 
We do this in the following two steps. 

\vspace{1mm}

\textit{Step 1.} Let $\Psi_N$ be
the approximate ground state 
under consideration, 
which we assume is a Slater determinant. 
Then, a standard calculation shows that 
\begin{align}
\label{eq1}
\langle  &  \Psi_N ,  H_N \Psi_N \rangle_{\mathfrak{h}_N} 	 \\ 
&   = 
   \tr  \big(
   \hat p ^2 + \hat x^2 
\big) \gamma_N 
+ 
N 
 \int_{\R^{2d}} 
\rho_N (x) V(x-y) \rho_N(y) \d x \d y 
  - 
  \frac{1}{N }
  \int_{	\R^{2d}	} V(x-  y ) |\gamma (x,y)|^2 \d x \d y  \ . 
  \nonumber 
\end{align}
where we denote $\rho_N(x) \equiv \frac{1}{N} \gamma_N (x,x)$. 
Recall that in  Assumption \ref{assumption 1}
we consider $\hat V \geq 0 $. Hence,  the second term on the right hand side above is non-negative, in view of the following representation 
\begin{equation}
\label{eq2}
	\int_{	\R^{2d}	} \rho (x) V(x-y) \rho(y) \d x \d y 
	 = 
	 \int_{	\R^{d}	} \hat V(k) |	 \hat \rho (k)|^2 \d k \geq 0 . 
\end{equation}
Therefore, thanks to   \eqref{eq1}, \eqref{eq2} and  \eqref{asymptotics}   
 we conclude that for $N \geq N_0$ large enough 
\begin{align}
	   \tr  \big(
	\hat p ^2 + \hat x^2 
	\big) \gamma_N 
& 	 \leq C N e_{TF}
+ 	 \frac{1}{N}	 	 \int_{	\R^{2d}	} V(x-y)  |\gamma_N (x,y)|^2 \d x \d y \  . 
\label{eq moment 3}
\end{align}

\vspace{2mm}

\textit{Step 2}. 
Here, we borrow  the following estimate 
from  the proof of   \cite[Proposition 3.1]{Fournais2018}
\begin{equation}
\label{eq moment 4}
\bigg| 
	 \frac{1}{N}
	\int_{\R^{d} \times \R^d}
	V(x-y)  |\gamma (x,y)|^2 \d x \d y 
\bigg| 
	\leq    \ve_N 
 \tr  \big( 
\hat p ^2  + \1   \big)  \gamma_N \  
\end{equation}
for a sequence of positive numbers $\ve_N \rightarrow 0$
as $ N \rightarrow \infty$. 
Consequently, for $N \geq N_0$ large enough we may assume $\ve_N\leq 1/2$. 
In particular, it follows from \eqref{eq moment 3},  \eqref{eq moment 4} and $\tr \gamma_N= N$
that  for a constant $C>0$
\begin{equation}
	\tr(\hat p ^2 + \hat x ^2) \gamma_N 
	\leq C N \big(
	e_{TF}  + 1 	\big) \ . 
\end{equation}
The proof is  finished once we combine the last inequality
with \eqref{eq moment 2}. 
\end{proof}

\section{Results on $k$-particle distribution functions}
\label{section higher}
 In this section we address the problem of convergence
 of higher-order distribution functions, as originally considered by the authors in \cite{Fournais2018}.

\subsection{Definitions}
Let $  2 \leq k <  N   $ 
and denote by  $\Psi_N	\in\mathfrak{ h }_N$   the approximate ground state.
We will work with the following definitions.

 \vspace{2mm}

(1) \textit{Wigner functions.}
Following  \cite[Eq. (1.18)]{Fournais2018} we define 	 
the 
$k$-particle Wigner function as follows 
\begin{equation}
	\label{def wigner k}
	f_N^{(k)}( X_k, P_k)
	\equiv 
	\int_{\R^{2dk} \times \R^{d(N-k)}}
		\Psi_N \Big(  X_k + \frac{\hbar }{2}Y_k  , X_{N-k} \Big) 
	\overline{ \Psi_N }{ \Big(  X_k + \frac{\hbar }{2}Y_k		  , X_{N-k}\Big) } 
	e^{	   -  i Y_k \cdot P_k }	
	\d X_{N - k } \d Y_k 
\end{equation} 
where  $(X_k, P_k) \in \R^{2dk}$, 
and we write  $X_{N-k} = (x_{k+1} , \ldots, x_N ) \in \R^{d (N -k )}$
for   the  integration variable. 
In particular, 
the normalization is chosen so that 
\begin{equation}
	\frac{1}{(2\pi)^{dk}}
	\int_{	\R^{2dk}	}
	f_N^{(k)} (X_k , P_k) \d X_k \d P_k =1 \   \ . 
\end{equation}

\vspace{2mm}

(2) \textit{Husimi measures.}
In this article, we define 
the $k$-particle Husimi measure 
as the convolution  
\begin{equation}
\label{husimi k}
	m_N^{(k)}  \equiv 
	\frac{N \cdots (N  - k + 1 )}{N^k }
	f_N^{(k)}*  \mathscr{G}_\hbar^{\otimes k }
\end{equation}
where 
$\mathscr{G}_\hbar^{\otimes k } = \mathscr{G}_\hbar \otimes \cdots \otimes \mathscr{G}_\hbar $
is the $k$-fold tensor product of the one-particle 
Gaussian mollifier
$\mathscr{G}_\hbar (z)= \hbar^{-d} \mathscr{G}_1 ( \hbar^{- 1/2} z )$.  
As explained by the authors in \cite[Section 2.4]{Fournais2018}, 
the convolution \eqref{husimi k}
agrees with their original definition.
In particular, 
we have \cite[Lemma 2.2]{Fournais2018}
\begin{equation}
	\label{k husimi bounds}
	0 \leq m_N^{(k)} \leq 1
	\qquad \t{and} \qquad 
	\frac{1}{(2\pi)^{dk}}
	\|	 m_N^{(k)}\|_{L^1}= 
	\frac{N  \cdots (N - k +1 )	}{N^k} \ . 
\end{equation}

\vspace{2mm}

(3) \textit{Reduced densities.}
We define $\gamma_N^{(k)}$, the  $k$-particle reduced density matrix, 
as  the trace-class
operator
on
$L^2(\R^{dk})$ with   kernel 
\begin{equation}
\label{def density k}
	\gamma_N^{(k)} (X_k , X_k ' )
	\equiv 
 \frac{N!}{(N-k)!}
	\int_{\R^{d(N-k)}}
	\Psi_N(  X_k , X_{N-k})
	\overline{ \Psi_N( X_k'  , X_{N-k})} 
	\ 		\d X_{N-k} 
\end{equation}
where 
$X_k = (x_1, \ldots, x_k), \, X_k' = (x_1' ,\ldots, x_k') \in \R^{dk}$
and  we denote $X_{N-k}  = (x_{k+1} , \ldots, x_{N}) $. 
In particular, the normalization is chosen so that
\begin{equation}
	\label{gamma bounds} 
	\tr \gamma_N^{(k)} 
	=
	 \frac{N!}{(N-k)!}  \ . 
\end{equation} 
Note the definition is not the same 
as   \cite{Fournais2018},  
but we take the one from  \cite{LiebSeiringer} since we shall borrow a calculation
in the proof of Lemma \ref{lemma L2} below.

\subsection{Statements} 
	In \cite{Fournais2018}
the authors prove  that the following convergence statement holds 
\begin{equation}
\label{husimi convg}
 \< \vp , 	 m_N^{(k)} \> \rightarrow \< \vp ,  f^{\otimes k }		\>
\end{equation}  
for all $\vp \in L^1 (\R^{2dk}) + L^\infty(\R^{2dk})$, 
where   $f = f_{\rho_{TF}}$ is  defined
in terms of  the unique minimizer of the Thomas-Fermi functional \eqref{thomas fermi}. 
In particular, one may then ask if 
the results contained in Theorem \ref{theorem 1} and \ref{thm 2}
extend to higher-order distribution functions.

\vspace{1mm}

We address this question in the next Theorem.
While some convergence results are easily translated 
into higher-order distribution functions, it turns out that some are not. 
Most notably, we prove that  for Slater determinants 
the strong $L^2$ convergence 
 $f_N^{(k)} \rightarrow f^{\otimes k }$
holds \textit{only} for $k=1$, 
although convergence holds in $H^{s}$ for all $s<0. $

%
%

\begin{theorem}[$k$-particle functions]
	\label{thm 3}
Let $ k \geq2$. Then, 	the following statements are true.  
	\begin{enumerate}[leftmargin=1cm]
		\item 
		The sequence $m^{(k)}_N$ converges  to $f^{\otimes k }$ strongly in $L^p(\R^{2dk})$ for all 
		$ p \in [1,\infty)$.

		\item 
		The sequence $f_N^{(k)}$ converges to $f^{\otimes k}$ strongly in $H_{s,\infty}(\R^{2dk})$ 
for all $s< 0$. 
	\end{enumerate}	
	Additionally,   if  $\Psi_N$ is a Slater determinant for all $N\geq1$, 
	the following is true. 
	\begin{enumerate}[leftmargin=1cm]
\item[(3)]
The sequence $f_N^{(k)}$ converges to $f^{\otimes k}$ strongly in $H_{s,q }(\R^{2dk})$ 
for all $s< 0$ and $q \in [2,\infty]$. 

\item[(4)]  The sequence $f_N^{(k)}$ 
\underline{does not} converge to $f^{\otimes k }$
strongly in $L^2 (\R^{2dk})$. 
In fact, there holds 
\begin{equation}
\liminf_{N\rightarrow \infty}		\|	f_N^{(k)} - f^{\otimes k }	\|_{L^2}
\geq 
(2\pi )^{dk/2}\big(
\sqrt{ k!	} -1 
\big). 	
\end{equation}
	\end{enumerate}
\end{theorem}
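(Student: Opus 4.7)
The plan is to extend the proofs of Theorems~\ref{theorem 1} and~\ref{thm 2} to the $k$-particle setting, where Parts (1)--(3) are relatively mechanical adaptations and the genuine surprise is the failure of strong $L^2$ convergence in Part (4). For Part (1), I will mirror the proof of Theorem~\ref{theorem 1}(1) verbatim with $f^{\otimes k}$ replacing $f$. The algebraic identities $f^2=f$ and $f(1-f)=0$ propagate to $(f^{\otimes k})^2=f^{\otimes k}$ and $(1-f^{\otimes k})f^{\otimes k}=0$, which are the only structural inputs needed. The $p=2$ case expands $\|m_N^{(k)}-f^{\otimes k}\|_{L^2}^2$ as an inner product, uses the a priori bound $\|m_N^{(k)}\|_{L^2}^2\leq\|m_N^{(k)}\|_{L^\infty}\|m_N^{(k)}\|_{L^1}\leq(2\pi)^{dk}$ from \eqref{k husimi bounds}, and closes via \eqref{husimi convg} with $\vp=f^{\otimes k}\in L^1$. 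The $p=1$ case splits $\|m_N^{(k)}-f^{\otimes k}\|_{L^1}$ using $(1-f^{\otimes k})f^{\otimes k}=0$, controls the $f^{\otimes k}$ part by Cauchy--Schwarz together with the $p=2$ case, and the complementary part by \eqref{husimi convg} with $\vp=1-f^{\otimes k}\in L^\infty$; interpolation against $m_N^{(k)}\leq 1$ handles $p\in(1,\infty)$.

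For Parts (2) and (3) I will repeat the triangle-inequality scheme of Theorem~\ref{theorem 1}(2), writing $|f_N^{(k)}-f^{\otimes k}|_{s,q}\leq|f_N^{(k)}-m_N^{(k)}|_{s,q}+\|m_N^{(k)}-f^{\otimes k}\|_{L^r}$ with $r=\infty$ when $q=\infty$ and $r=2$ otherwise, the latter being controlled by Part~(1). The defect $\hat m_N^{(k)}=\hat f_N^{(k)}\cdot\widehat{\mathscr G_1}^{\otimes k}(\hbar\,\bullet\,)$ together with H\"older regularity of the Gaussian at the origin reduces matters to a uniform bound on $\hat f_N^{(k)}$. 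The $k$-particle analog of Groenewold's formula \eqref{groenewold} directly yields $\|\hat f_N^{(k)}\|_{L^\infty}\leq 1$, proving Part (2). For Part (3), the Slater-determinant assumption delivers the uniform bound $\|f_N^{(k)}\|_{L^2}\leq C_k$ from the computation carried out in Part (4) below; this yields the $q=2$ version, and interpolation covers $q\in(2,\infty)$.

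Part (4) is the essential new computation. For a Slater determinant, Wick's theorem gives $\gamma_N^{(k)}(X_k,X_k')=\det[\gamma_N(x_i,x_j')]_{1\leq i,j\leq k}$, and since $\gamma_N$ is a rank-$N$ orthogonal projection, $\gamma_N^{(k)}$ is identified with $k!$ times the orthogonal projection onto $\bigwedge^k\mathrm{Range}(\gamma_N)$. This gives the exact Hilbert--Schmidt norm $\|\gamma_N^{(k)}\|_{HS}^2=k!\,N!/(N-k)!$. Passing to Wigner variables via Plancherel in $P_k$ (combined with the change of variables $z=X_k+y/2$, $z'=X_k-y/2$) and using the scaling $\hbar^{-dk}=N^k$ in \eqref{def wigner k} yields
\begin{equation*}
    \|f_N^{(k)}\|_{L^2}^2 \;=\; (2\pi)^{dk}\,k!\,\frac{N^k}{N(N-1)\cdots(N-k+1)} \;\longrightarrow\; (2\pi)^{dk}\,k! \qquad \text{as } N\to\infty.
\end{equation*}
Since $\|f^{\otimes k}\|_{L^2}=(2\pi)^{dk/2}$, the reverse triangle inequality then delivers
\begin{equation*}
    \liminf_{N\to\infty}\|f_N^{(k)}-f^{\otimes k}\|_{L^2}\;\geq\;(2\pi)^{dk/2}\bigl(\sqrt{k!}-1\bigr),
\end{equation*}
which is strictly positive for $k\geq 2$.

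The main obstacle is the correct bookkeeping of the combinatorial factor $k!$ produced by fermionic anti-symmetrization in Part (4); it is this factor that forces a strict mismatch between $\lim_N\|f_N^{(k)}\|_{L^2}$ and $\|f^{\otimes k}\|_{L^2}$, and hence the failure of strong $L^2$ convergence. The mismatch is invisible at the level of the Husimi measures, because the prefactor $N(N-1)\cdots(N-k+1)/N^k\to 1$ in \eqref{husimi k} together with Gaussian mollification absorbs it, and is equally invisible in the negative Sobolev norms of Part~(2), where the uniform estimate $|\hat f_N^{(k)}|\leq 1$ already controls the excess mass at the cost of regularity on the Fourier side.
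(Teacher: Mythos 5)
Your proposal follows the paper's proof essentially verbatim: parts (1)--(3) are the same mechanical adaptations of Theorems \ref{theorem 1} and \ref{thm 2} (the identity $f^{\otimes k}(1-f^{\otimes k})=0$, the $k$-particle Groenewold formula giving $\|\widehat{f_N^{(k)}}\|_{L^\infty}\le 1$, and the Slater-determinant $L^2$ bound), and part (4) rests on exactly the same computation as the paper's Lemma \ref{lemma L2}, namely that $\gamma_N^{(k)}$ is $k!$ times an orthogonal projection so that $\|\gamma_N^{(k)}\|_{HS}^2=k!\,N!/(N-k)!$, yielding $\|f_N^{(k)}\|_{L^2}^2\to(2\pi)^{dk}k!$ and the stated lower bound by the reverse triangle inequality. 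One small slip to fix: in part (2) the term $|m_N^{(k)}-f^{\otimes k}|_{s,\infty}$ is controlled through $|g|_{s,\infty}\le\|\hat g\|_{L^\infty}\lesssim\|g\|_{L^1}$, so the correct exponent is $r=1$ (covered by your part (1)), not $r=\infty$, for which neither the embedding nor the convergence is available.
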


 \begin{remark}[Lack of uniform $L^2$ bounds]
Note that   
in item (2) we  are not 
able to obtain   convergence in $ H^{	 s }(\R^{2dk})$, 
as opposed to the $k=1$ case in Theorem \ref{theorem 1}. 
This is a consequence of our    lack of control of the $L^2$ norms of   $f_N^{(k)}$
for $ k \geq 2$. 
Recall that   for  density matrices $\gamma_N^{(k)}$ for $ k \geq 2 $
one does not  have  operator bounds which are uniform in $N$. 
 	For  illustration, for $ k  =  2$ it is in general known that 
\begin{equation}
\label{op bound}
	0  \leq \gamma_N^{(2)} \leq (N-1)\1  \  , 
\end{equation}
see e.g 
\cite[Lemma 1]{Bach} or \cite[Theorem 8.12]{Solovej2014}. 
In contrast,  for the case  $k=1 $, there holds   $0 \leq \gamma_N^{(1) } \leq \1 $, 
which implies $\|	 f_N^{(1 )}	\| _{L^2} \leq (2\pi)^{d/2}$. 
On the other hand, 
the operator bound \eqref{op bound} 
only gives the estimate 
$\|	 \gamma_N^{ (2)}	\|_{HS} \leq
\|	 \gamma_N^{(2)}	\|^{	 \frac{1}{2} } 
\|	 \gamma_N^{ (2)}	\|^{ \frac{1}{2} }_{\tr}
\leq N^{3/2}, 
$
which is not sufficient to guarantee 
a uniform $L^2$ bound for $f_{N}^{(2)}$. 
Let us further make two comments.

\begin{enumerate}[leftmargin=*]
	\item 
In item (3), the uniform $L^2$ bounds
are    recovered for the Slater determinants, for which 
an explicit formula can be given for all $ k \geq 2$; 
see Lemma \ref{lemma L2}. 
This allows for convergence in the larger class of spaces. 

	\item 
	
		Upon completion of this work, 
		the author became 
	aware of the recent result of Christiansen \cite[Theorem 1]{Christiansen}\footnote{The author would like to thank the anonymous referee that pointed this out}, 
	where it is proven that $ \|\gamma_N^{(2)}	\|_{HS}  \leq \sqrt 5 N$. 
	The proof bypasses the operator norm  \eqref{op bound} and    holds 
	for \textit{any} normalized state $\Psi_N$. 
Consequently, $\sup_{N \geq 1 } \|	 f_N^{(2)}	\|_{L^2}<\infty$ 
using the unitarity of the Wigner transform (see also \eqref{f HS}).  
Hence, we can relax the assumption of $\Psi_N$
being a Slater determinant in 
 item (3) in Theorem \ref{thm 3}
for $k=2$. 
In particular, $f_N^{(2)} \rightarrow f \otimes f $
in $H^s(\R^{4d})$ for all $s<0$.

\end{enumerate}

 \end{remark}

The proof of items (3) and (4) of Theorem \ref{thm 3}
are based on the following calculation. 
While the result may have been known to experts, 
we record it here for completeness of the exposition.

 \begin{lemma}\label{lemma L2}
 	Assume that $\Psi_N$ is  a Slater determinant for all $N \geq 2$.
 	Then, for all  $ k \geq 2$ 
 	\begin{equation}
\lim_{N \rightarrow \infty } 		\|	f_N^{(k)}	\|_{L^2}
 =
   \Big(   (2\pi)^{dk } k! 	\Big)^{\frac{1}{2}}	\ . 
 	\end{equation}
 \end{lemma}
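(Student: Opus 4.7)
The plan is to reduce the $L^2$ norm of the $k$-particle Wigner function to the Hilbert--Schmidt norm of the $k$-particle density matrix $\gamma_N^{(k)}$, and then carry out an explicit combinatorial computation using the determinantal structure of Slater determinants.

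First, I would record the $k$-particle analogue of the Wigner isometry. The $k$-particle Wigner transform $W_k^\hbar$ relating $\gamma_N^{(k)}$ to $f_N^{(k)}$ is a scalar multiple of an $L^2$ isometry, so that
\begin{equation*}
\|f_N^{(k)}\|_{L^2(\R^{2dk})}^2 = (2\pi\hbar)^{dk}\,\|\gamma_N^{(k)}\|_{HS}^2 = (2\pi\hbar)^{dk}\,\tr\bigl((\gamma_N^{(k)})^2\bigr).
\end{equation*}
Since $\hbar^d = N^{-1}$, it thus suffices to prove that $\tr\bigl((\gamma_N^{(k)})^2\bigr) = k!\,\tfrac{N!}{(N-k)!}$, as then dividing by $N^k$ gives $k!(1+o(1))$ and the square root yields the stated limit.

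Next, I would invoke the standard fact that, for a Slater determinant whose one-particle density matrix $\gamma_N$ is a rank-$N$ orthogonal projection, the $k$-particle density matrix has the determinantal kernel
\begin{equation*}
\gamma_N^{(k)}(X_k;X_k') = \det\bigl[\gamma_N(x_i,x_j')\bigr]_{1\le i,j\le k} = \sum_{\sigma\in S_k}\t{sgn}(\sigma)\prod_{i=1}^k \gamma_N(x_i,x_{\sigma(i)}').
\end{equation*}
Plugging this into $\tr((\gamma_N^{(k)})^2) = \int \gamma_N^{(k)}(X;X')\gamma_N^{(k)}(X';X)\,dX\,dX'$ and expanding both determinants gives a double sum over $(\sigma,\tau)\in S_k\times S_k$. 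Integrating out the primed variables, each factor $\int \gamma_N(x_i,x_{\sigma(i)}')\gamma_N(x_{\sigma(i)}',x_{\tau\sigma(i)})\,dx_{\sigma(i)}'$ collapses to $\gamma_N^2 = \gamma_N$ because $\gamma_N$ is a projection. Integrating the remaining unprimed variables then decomposes along the cycles of $\pi \defeq \tau\sigma$: each cycle of length $m$ contributes $\tr(\gamma_N^m) = N$. The result is
\begin{equation*}
\tr\bigl((\gamma_N^{(k)})^2\bigr) = \sum_{\sigma,\tau\in S_k}\t{sgn}(\sigma\tau)\,N^{c(\tau\sigma)} = k!\sum_{\pi\in S_k}\t{sgn}(\pi)\,N^{c(\pi)},
\end{equation*}
where in the last equality I reparametrized by $\pi=\tau\sigma$ (the sum over $\sigma$ contributes the factor $k!$).

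Finally, I would invoke the classical identity $\sum_{\pi\in S_k}\t{sgn}(\pi)\,N^{c(\pi)} = N(N-1)\cdots(N-k+1) = \tfrac{N!}{(N-k)!}$ (the falling factorial, obtainable e.g.\ by noting it is the characteristic polynomial identity $\det(N\cdot\1_k - J_k + J_k) = \ldots$, or directly by induction on $k$). Combined with the two preceding steps,
\begin{equation*}
\|f_N^{(k)}\|_{L^2}^2 = (2\pi\hbar)^{dk}\cdot k!\cdot \frac{N!}{(N-k)!} = (2\pi)^{dk}\,k!\cdot\frac{N(N-1)\cdots(N-k+1)}{N^k} \xrightarrow[N\to\infty]{} (2\pi)^{dk}\,k!,
\end{equation*}
which is the desired limit. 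The only mildly delicate step is the cycle-structure accounting in the second paragraph; the rest is bookkeeping once the isometry and the determinantal kernel are in hand.
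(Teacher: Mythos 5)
Your proposal is correct, and it diverges from the paper's argument at the one substantive step. Both proofs begin identically: reduce $\|f_N^{(k)}\|_{L^2}$ to $\|\gamma_N^{(k)}\|_{HS}$ via the $k$-particle Wigner isometry and the scaling $\hbar^d=N^{-1}$. (One small normalization point: with the paper's conventions one has $f_N^{(k)}=N^k\frac{(N-k)!}{N!}\,W_k^\hbar[\gamma_N^{(k)}]$, so your displayed identity $\|f_N^{(k)}\|_{L^2}^2=(2\pi\hbar)^{dk}\|\gamma_N^{(k)}\|_{HS}^2$ carries an extra factor $\big(N^k\frac{(N-k)!}{N!}\big)^2$; since this tends to $1$ it is harmless for the limit, but the equality is not exact at finite $N$.) Where you differ is in evaluating $\tr\big((\gamma_N^{(k)})^2\big)$. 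The paper writes $\gamma_N^{(k)}=k!\sum_{\ell_1<\cdots<\ell_k}\ket{\vp_{\ell_1}\wedge\cdots\wedge\vp_{\ell_k}}\bra{\vp_{\ell_1}\wedge\cdots\wedge\vp_{\ell_k}}$, reads off $(\gamma_N^{(k)})^2=k!\,\gamma_N^{(k)}$, and concludes $\tr\big((\gamma_N^{(k)})^2\big)=k!\,\tr\gamma_N^{(k)}=k!\,\frac{N!}{(N-k)!}$ in one line. You instead use the determinantal (Wick) kernel $\det[\gamma_N(x_i,x_j')]$, integrate out the primed variables using $\gamma_N^2=\gamma_N$, and count cycles, invoking $\sum_{\pi\in S_k}\mathrm{sgn}(\pi)\,N^{c(\pi)}=\frac{N!}{(N-k)!}$. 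Your cycle accounting is correct and yields the same value, so the two routes are equivalent in outcome; the paper's operator-level identity is shorter and avoids combinatorics entirely, while yours is more explicit at the kernel level and only needs Wick's rule together with the projection property of the one-particle density matrix, at the cost of the falling-factorial identity as an extra ingredient.
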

 
 \begin{remark}
The $L^2$ norm of $f_N^{(k)}$ 	
can be computed explicitly for finite $N$, see \eqref{f L2}. 
 \end{remark}

 \subsection{Proof of Theorem \ref{thm 3}}
First,  we shall give a proof of Theorem  \ref{thm 3}  which is based on Lemma \ref{lemma L2}. Then, we prove Lemma \ref{lemma L2}. 

\vspace{1mm}

Before we turn to the proof of Theorem \ref{thm 3}, let us introduce some notation.
Namely, it will be extremely convenient
to recast the Wigner function $f_N^{(k)}$
as the Wigner transform of
the reduced density matrix $\gamma_{ N}^{(k)}$. 
To this end, 
we define the  following  linear map, 
which we refer to as the Wigner transform of order $ k$: 
\begin{equation}
	W_k^\hbar[	 \gamma_k	](X_k , P_k)
	\equiv 
	\int_{\R^{dk}}
	\gamma_k 
	\Big( 
	X_k + \frac{1}{2} Y_k , 
	X_k - \frac{1}{2}Y_k 
	\Big) 
	\exp\Big(
	- \frac{i}{\hbar} Y_k\cdot P_k 
	\Big) \  \d Y_k
\end{equation}
where    $\gamma_k \in L^2(	 \R^{2dk}	)$
and
$(X_k, P_k) \in \R^{2dk}$. 
In particular,  the following is true. 
\begin{enumerate}[label=(\roman*)]
	\item  
	The map $W_k^\hbar$ is an $L^2$-isomorphism, and with the present normalization
	\begin{equation}
		\label{f 2}
		\|	 W_k^\hbar [\gamma_k]	\|_{L^2}	
		=
		(2\pi\hbar)^{\frac{dk}{2 }	} 
		\|	\gamma_k\|_{L^2} \ , 	 \qquad \gamma_k \in L^2(\R^{2dk}) \ . 
	\end{equation}
	See e.g \cite[Proposition 13]{Combescure2012}. 
	
	\item  Starting from \eqref{def wigner k}, 
	we   use the definition \eqref{def density k} of $\gamma_{ N}^{(k)}$, 
	change variables $Y_k \mapsto \hbar^{-1} Y_k $ 
	and use the scaling $ \hbar^{-dk } = N^k$
	to find that 
	
	\begin{equation}
		\label{f 1}
		f_N^{(k)	}=	N^k \ \frac{(N-k)!}{N!}	\ W_k^\hbar 	  [\gamma_N^{(k)}]	  \ . 
	\end{equation} 	  
	Note that $\lim_{ N \rightarrow \infty } N^k \ \frac{(N-k)!}{N!}	 =1 $. 
\end{enumerate}

 \vspace{1mm}
 
 \begin{proof}[Proof of Theorem \ref{thm 3}]
 	In what follows we fix $k\geq 2 $
 	and assume $N >  k $.  
 	
 	\vspace{2mm}
 	
(1) 
The proof relies on Lemma \ref{lemma abstract}.
Here, we take $F_n( Z_k) = m_N^{(k)}( (2\pi)^{\frac{1}{2} } 	 Z_k	)$
and similarly
$F (Z_k) =   ( f^{\otimes k }  )( (2\pi)^{\frac{1}{2}}	Z_k	)	$
for $Z_k \in \R^{2dk}$. 
The bounds \eqref{cond Fn} and \eqref{cond F} 
are straightforward to verify
and follow from \eqref{k husimi bounds}
and $\int f  =  (2\pi)^{d/2}$  and $f =f^2 $, 
respectively. 
The weak convergence \eqref{husimi convg}
has been proven in \cite{Fournais2018}. 
%
 	
 	\vspace{2mm}
 	
(2)  Here, we repeat the argument used in the proof of Theorem \ref{theorem 1}. 
The proof is identical, 
and the only modification comes from justifying the
uniform bound $\|	 \hat f_N 	\|_{L^\infty} \leq 1 $. 
To justify this in the case $k \geq 2$, 
we note that a calculation using \eqref{f 1} and  the scaling 
$\hbar^{dk} = N^{-k }$
 	yields  the analogous Groenewold's formula \eqref{groenewold}
 	for $k$-particle Wigner functions 
 	\begin{equation}
 		\widehat{f_N^{(k)}} (   \zeta_k  	)
 		= 
 \frac{(N  - k )! }{N! }
 		\tr \big[
 		\O^{(k)}_{\zeta_k}
 		\gamma_{ N}^{(k)}
 		\big]   , \qquad \zeta_k = (\xi_k , \eta_k) \in \R^{dk}\times \R^{dk}
 	\end{equation}
 	where the trace is over $L^2(\R^{dk})$, 
 	and 
 	$	 \O^{(k)}_{\zeta_k} = \exp(  i( \xi_k \hat{X_k}	 +
 	\eta_k	 \hat{P_k})	)$
 	is a semi-classical observable 
 	in $L^2(\R^{2dk})$. 
 	Hence, 
 	thanks to  \eqref{gamma bounds} 
 	we find   
 	$\|	 \widehat{f_N^{(k)}	}	\|_{L^\infty}
 	\leq 1 	$.  
 	
 	\vspace{2mm}
 	
 	(3) We repeat the argument used in the proof of Theorem \ref{theorem 1}. 
 	Namely, we   prove  the $q =2 $ case. 
 	The only modification comes from the uniform 
 	$L^2$ bound
 	$\|	 f_N	\|_{L^2} \leq (2\pi)^{d/2} $.
For $ k \geq 2 $, 
this uniform bound   is now provided by Lemma \ref{lemma L2}, i.e. 
there holds 
$\sup_{N \geq 1} \|	f_N^{(k)}	\|_{L^2} < \infty$. 
 	We can then interpolate between $q=2$ and $q=\infty$. 
 	
 	\vspace{2mm}
 	
 	(4) It suffices to use the triangle inequality, 
 	 Lemma \ref{lemma L2}
 	and $\|	 f^{\otimes k }	\|_{L^2}	= \|	f	\|_{L^2}^k = (2\pi)^{dk/2}		$
 	to find that 
 	\begin{equation}
\liminf_{N\rightarrow \infty}
 		\|	 f_N^{(k)} - f^{\otimes k }\	\|_{L^2}
 		\geq 
\liminf_{N\rightarrow \infty} 		 		\|	 f_N^{(k)}  \|_{L^2}			-  \|	f^{\otimes k } 	\|_{L^2}
 		 		= 
 		 		(2\pi)^{dk/2} 
 		 		\Big(
 		 		\sqrt{k!} -1 
 		 		\Big) \ . 
 	\end{equation}
  This finishes the proof of the theorem. 
 	 \end{proof}

		\begin{proof}[Proof of Lemma \ref{lemma L2}]
Fix  $ k \geq 2$ and  let $N  >  k$. 
We split the proof into two parts. 
In the first part, we  calculate a relation between $\|	f_N^{(k)}\|_{L^2}$
and
$\|		 \gamma_{N}^{(k)}	\|_{HS}$, which is independent
of $\Psi_N$ being a Slater determinant.
In the second part,  we use the fact that $\Psi_N$ is a Slater determinant
to compute  the Hilbert-Schmidt norm of $\gamma_{ N}^{(k)}$. 

\vspace{1mm}
For the first part, 
we use   \eqref{f 1}
and then \eqref{f 2}
to find that 
			\begin{align}
\nonumber 
				\|	f_N^{(k)}	\|_{ L^2 } 
& 				 \  =  \ 
N^k \ \frac{(N-k)!}{N!}		\|		W_k^\hbar 	 [\gamma_N^{(k)}]\|_{L^2}			 \\
\nonumber 
&   \ =  \ 
				 N^k \ \frac{(N-k)!}{N!}
				 			 (2\pi\hbar  )^{ \frac{dk}{2}}  
				 			\|		 \gamma_N^{(k) }	\|_{L^2}	\\ 
& 				 			 \  =  \ 
				 			 N^k \ \frac{(N-k)!}{N!}
				 		(2\pi\hbar  )^{ \frac{dk}{2}}  
				 		\|		 \gamma_N^{(k) }	\|_{HS}
				 		\label{f HS}
			\end{align}
where  in the last line we   used $\|	 \gamma_k 	\|_{HS } = \|	\gamma_k 	\|_{L^2}$.

\vspace{2mm}

For the second part, we observe that  when $\Psi_N$ is  a Slater determinant, 
the $k$-particle reduced density matrix  $\gamma_N^{(k)}$ can be calculated explicitly.
Namely, assume that
\begin{equation}
	\Psi_N (x_1 , \ldots, x_N)	 =	
		 \frac{1}{\sqrt{N!}} 
		 \det_{1  \leq i,j \leq N }	
	\big[
	\vp_i(x_j)
	\big]
	 \ , \qquad (x_1, \ldots, x_N) \in \R^{dN} \ . 
\end{equation}
The orbitals $ (\vp_i)_{i=1}^N$ can depend on $N$ but we do not display such dependence in the notation, for it has no effect in the upcoming calcuation. 
Next, we note that the kernel of $\gamma_{ N}^{(k)}$ is given by  (see for reference  \cite[Section 3.1.5]{LiebSeiringer})
\begin{align}
\nonumber 
	\gamma_N^{ (k)}	 (x_1 , \ldots, x_k , x_1' , \ldots, x_k')	 
& 	   =  
	 k! 
	 \sum_{1  \leq \ell_1 < \cdots < \ell_k \leq N } 
	 \frac{1}{\sqrt k } 
	 \det_{1  \leq i,j \leq N }	
	 \big[
	 \vp_{  \ell_i 	}	(x_j)
	 \big]
	 \	
	 \overline{ \frac{1}{\sqrt k } 
	 \det_{1  \leq i,j \leq N }	
	 \big[
	 \vp_{  \ell_i 	}	(x_j'	)
	 \big]	}  \ . 
 \end{align}
Hence, we write  in operator form  
\begin{align} 
\label{eq gamma}
	\gamma_N^{ (k)}	  = 
   k! 
 \sum_{1  \leq \ell_1 < \cdots < \ell_k \leq N } 
  \ket{  \vp_{\ell_1} \wedge \cdots \wedge \vp_{\ell_k}		}
  \bra{  \vp_{\ell_1} \wedge \cdots \wedge \vp_{\ell_k}		} \ . 
\end{align}
In particular, note that $\tr \gamma_{ N}^{(k)} =  k! \binom{N}{k} = \frac{N!}{(N-k)!}	. $
Furthermore, it follows from \eqref{eq gamma}
that $\gamma_{ N}^{(k)} \gamma_{ N}^{(k)} = k! \gamma_{ N}^{(k)}$.
Thus, we can compute the Hilbert-Schmidt norm as follows 
\begin{align}
\label{gamma HS}
	\|	\gamma_N^{(k)}	 	\|_{HS}^2
		\,  =  \, 
	 \tr \Big[	( \gamma_N^{(k)}	)^* \gamma_N^{(k)}	\Big] 
 \, 	  =  \, 
	  k!  \  \tr\gamma_N^{(k)}	
	\,  = 		\,		 k! 	 \frac{N!}{(N-k)!} \  , 
\end{align}
where we have used self-adjointness of $\gamma_N^{(k)}	. $

\vspace{2mm}

Let us now put everything together. 
Namely \eqref{gamma HS} and \eqref{f HS} imply 
\begin{align}
\nonumber 
	\|		f_N^{(k)}\|_{L^2}
	&  = 
	 		 N^k \ 
	 		  \frac{(N-k)!}{N!}
	 (2\pi\hbar  )^{ \frac{dk}{2}}  
 \Big( 	  k! 	 \frac{N!}{(N-k)!}  \Big)^{\frac{1}{2}}	\\ 
 &  = 
 \Big(	 (2\pi )^{dk}  k! 	\Big)^{\frac{1}{2}}
 \bigg[
 N^k \hbar^{\frac{dk}{2}} 
\Big(  \frac{(N-k)!}{N!}\Big)^{\frac{1}{2}}
 \bigg] \ . 
 \label{f L2}
\end{align}
It suffices now to use the scaling $\hbar^d = N^{-1}$, 
take the limit $N \rightarrow \infty$
and observe that the factor in squared brackets $[\cdots]$
in \eqref{f L2} converges to $1$. 
	\end{proof}

\noindent 	\textbf{Acknowledgements.}
I am   grateful to J.K Miller and N. Pavlovi\'c
	for discussions that ultimately led to the study of the problem in this article. 
	I am also very  thankful to D. Hundertmark
	for his comments  that helped improve an earlier version of this manuscript. 
	I would also like to acknowledge 
	  important remarks from two anonymous referees
	that helped  significantly improve the conclusion of Theorem \ref{theorem 1}, 
	covering now the additional range $ p \in (2 , \infty)$. 
The author gratefully acknowledges support from the Provost’s Graduate Excellence Fellowship at The University of Texas at Austin and from the NSF grant DMS-2009549, and the NSF grant DMS-2009800 through T. Chen.

\vspace{3mm}

\noindent \textbf{Data availability.} This manuscript has no associated data.
\vspace{3mm}

\noindent 
\textbf{Conflict of interest.} 
On behalf of all authors, the corresponding author states that there is no conflict of interest.

\end{document}